\documentclass[11pt]{article} 
\usepackage{ifthen}
\usepackage{xspace}
\usepackage{float}
\usepackage{subcaption}
\usepackage{amsmath}
\interdisplaylinepenalty=2500
\usepackage{amsthm}
\usepackage{amstext}
\usepackage{amssymb}
\usepackage{amsfonts}
\usepackage{bm}
\usepackage{algorithm}
\usepackage[noend]{algpseudocode} 
\usepackage{supertech}
\usepackage[margin=1in]{geometry}
\usepackage[numbers,sort&compress,square]{natbib}
\usepackage{mathdots}
\usepackage{textlater}
\usepackage{relate}
\usepackage{times}
\usepackage{thmtools}
\usepackage{thm-restate}
\usepackage{tikz}
\usetikzlibrary{calc, fit, shapes,intersections,arrows,matrix, topaths}
\usetikzlibrary{decorations,decorations.pathreplacing,patterns, shadows}


%

  
\newif\ifbuildfigure \buildfiguretrue  

\definecolor{magenta4}{rgb}{0.5625,0,0.5625}
\definecolor{green4}{rgb}{0,0.5625,0}
\definecolor{orange4}{rgb}{0.98,0.31,0.09}



\newif\ifjournal 
 \journaltrue


\newif\ifnotes
\notestrue
\ifnotes
\newcommand{\sam}[1]{{\ifnotes \scriptsize \textcolor{green4}{Sam: {#1}} \fi}}
\newcommand{\rob}[1]{{\ifnotes \scriptsize \textcolor{purple}{Rob: {#1}} \fi}}
\newcommand{\shikha}[1]{\ifnotes {\noindent \scriptsize  \textcolor{blue} {Shikha: {#1}}} \fi{}}

\newcommand{\mab}[1]{{\ifnotes \scriptsize \textcolor{red}{Michael:
      {#1}} \fi}}
\newcommand{\mayank}[1]{{\ifnotes \scriptsize \textcolor{yellow}{Mayank: {#1}} \fi}}

\else
\newcommand{\shikha}[1]{}
\newcommand{\mayank}[1]{}
\newcommand{\martin}[1]{}
\newcommand{\sam}[1]{}
\newcommand{\mab}[1]{}
\newcommand{\rob}[1]{}
\fi

\newcommand{\build}[1][]{\mbox{\textsc{Build}}\ifthenelse{\equal{#1}{}}{}{\ensuremath{(}#1\ensuremath{)}}\xspace}
\newcommand{\lookup}[1][]{\mbox{\textsc{Lookup}}\ifthenelse{\equal{#1}{}}{}{\ensuremath{(}#1\ensuremath{)}}\xspace}
\newcommand{\update}[1][]{\mbox{\textsc{Update}}\ifthenelse{\equal{#1}{}}{}{\ensuremath{(}#1\ensuremath{)}}\xspace}
\newcommand{\delete}[1][]{\mbox{\textsc{Delete}}\ifthenelse{\equal{#1}{}}{}{\ensuremath{(}#1\ensuremath{)}}\xspace}

\newcommand{\amqbuild}[1][]{\mbox{\textsc{Init}}\ifthenelse{\equal{#1}{}}{}{\ensuremath{(}#1\ensuremath{)}}\xspace}
\newcommand{\amqlookup}[1][]{\mbox{\textsc{Lookup}}\ifthenelse{\equal{#1}{}}{}{\ensuremath{(}#1\ensuremath{)}}\xspace}
\newcommand{\amqupdate}[1][]{\mbox{\textsc{Adapt}}\ifthenelse{\equal{#1}{}}{}{\ensuremath{(}#1\ensuremath{)}}\xspace}
\newcommand{\amqinsert}[1][]{\mbox{\textsc{Insert}}\ifthenelse{\equal{#1}{}}{}{\ensuremath{(}#1\ensuremath{)}}\xspace}
\newcommand{\amqdelete}[1][]{\mbox{\textsc{Delete}}\ifthenelse{\equal{#1}{}}{}{\ensuremath{(}#1\ensuremath{)}}\xspace}

\newcommand{\dictbuild}[1][]{\mbox{\textsc{Dict-Build}}\ifthenelse{\equal{#1}{}}{}{\ensuremath{(}#1\ensuremath{)}}\xspace}
\newcommand{\dictlookup}[1][]{\mbox{\textsc{Dict-Lookup}}\ifthenelse{\equal{#1}{}}{}{\ensuremath{(}#1\ensuremath{)}}\xspace}
\newcommand{\dictupdate}[1][]{\mbox{\textsc{Dict-Update}}\ifthenelse{\equal{#1}{}}{}{\ensuremath{(}#1\ensuremath{)}}\xspace}
\newcommand{\dictinsert}[1][]{\mbox{\textsc{Dict-Insert}}\ifthenelse{\equal{#1}{}}{}{\ensuremath{(}#1\ensuremath{)}}\xspace}
\newcommand{\dictdelete}[1][]{\mbox{\textsc{Dict-Delete}}\ifthenelse{\equal{#1}{}}{}{\ensuremath{(}#1\ensuremath{)}}\xspace}

\newcommand{\aadbuild}[1][]{\mbox{\textsc{Init}}\ifthenelse{\equal{#1}{}}{}{\ensuremath{(}#1\ensuremath{)}}\xspace}
\newcommand{\aadlookup}[1][]{\mbox{\textsc{Lookup}}\ifthenelse{\equal{#1}{}}{}{\ensuremath{(}#1\ensuremath{)}}\xspace}
\newcommand{\aadinsert}[1][]{\mbox{\textsc{Insert}}\ifthenelse{\equal{#1}{}}{}{\ensuremath{(}#1\ensuremath{)}}\xspace}
\newcommand{\aaddelete}[1][]{\mbox{\textsc{Delete}}\ifthenelse{\equal{#1}{}}{}{\ensuremath{(}#1\ensuremath{)}}\xspace}

\newcommand{\evolve}[1][]{\mbox{\textsc{Evolve}}\ifthenelse{\equal{#1}{}}{}{\ensuremath{(}#1\ensuremath{)}}\xspace}

\newcommand{\adaptivityGame}[1][]{\mbox{\textsc{adaptivity-game}}\ifthenelse{\equal{#1}{}}{}{\ensuremath{(}#1\ensuremath{)}}\xspace}

\newcommand{\forced}[1][]{\Phi\xspace}
\newcommand{\tfp}[1][]{\mbox{\textsc{TFP}}\ifthenelse{\equal{#1}{}}{}{\ensuremath{(}#1\ensuremath{)}}\xspace}

\newcommand{\extend}[1][]{\mbox{\textsc{Adapt}}\ifthenelse{\equal{#1}{}}{}{\ensuremath{(}#1\ensuremath{)}}\xspace}
\newcommand{\absfp}[1][]{{absolute false positive}\xspace}
\newcommand{\absset}{{\textnormal{AFP}}}
\newcommand{\oabsset}{{\textnormal{OFP}}}
\newcommand{\absfps}{{absolute false positives}\xspace}
\newcommand{\oabsfp}{{original absolute false positive}\xspace}
\newcommand{\oabsfps}{{original absolute false positives}\xspace}
\newcommand{\fix}[1][]{\textnormal{FIX}}

\newcommand{\prefix}{\sqsubseteq}
\newcommand{\FP}{\text{FP}}

\newcommand{\fprate}{sustained false-positive rate\xspace}

\newcommand{\fpprob}{false-positive probability\xspace}

\newenvironment{proofsketch}{%
  \proof}{\endproof}

\let\oldtexttt\texttt
\renewcommand{\texttt}[1]{\xspace{\normalfont{\oldtexttt{#1}}}\xspace}

\renewcommand{\epsilon}{\varepsilon}




\newcommand{\E}{\mathbf{E}}

\renewcommand{\epsilon}{\varepsilon}

\newcommand{\calA}{{\cal A}}
\newcommand{\calS}{{\cal S}}
\newcommand{\calfS}{{\cal S^*}}

\newcommand{\calU}{{\cal U}}

\newcommand{\calP}{{\cal P}}

\newcommand{\calQ}{{\cal Q}}
\newcommand{\calZ}{{\cal Z}}
\newcommand{\calO}{\ensuremath{{\cal O}}\xspace}

\newcommand{\poly}{\textnormal{poly}}

\newif\ifproofof
\proofoftrue

\ifproofof
\newenvironment{proofof}[1]{$ $\newline \noindent{\em Proof of {#1}. }\ignorespaces}{\putqed \newline}

\else

\fi




\newcommand{\pparagraph}[1]{\vspace{0.07in}\noindent{\emph{#1}}}  

\newcommand{\baseline}{baseline\xspace}

 \newcommand{\mextend}{\textsc{Extend}}

 \newcommand{\arev}[1]{$\mbox{\textsc{RevLookup}}_{\calP}(#1)$\xspace}
 \newcommand{\rev}{$\mbox{\textsc{RevLookup}}_{\calP}$\xspace}

\newcommand{\dict}{{\mathbf D}}

\newcommand{\rem}{{\mathbf R}} 
\newcommand{\loc}{{\mathbf L}} 

\newcommand{\lmem}{\loc}

\newcommand{\present}{\textsc{Present}\xspace}
\newcommand{\absent}{\textsc{Absent}\xspace}

\newcommand{\True}{\textsc{True}\xspace}

\relateright=0.5em 
\relateleft=0.5em

\newcommand{\acksmall}{}


\author{
Michael A.~Bender\thanks{
Stony Brook University, 
Stony Brook, NY 11794-4400, USA. 
Email:~\texttt{bender@cs.stonybrook.edu}.}
\and
Martin Farach-Colton\thanks{
Rutgers University, Piscataway NJ 08855, USA.
Email:~\texttt{farach@cs.rutgers.edu}.}
\and
Mayank Goswami\thanks{
Queens College, CUNY, New York, USA. 
Email:~\texttt{mayank.goswami@qc.cuny.edu}.}
\and
Rob Johnson\thanks{
VMware Research,
Creekside F,
3425 Hillview Ave,
Palo Alto, CA 94304.
Email:~\texttt{robj@vmware.com}.}
\and
Samuel McCauley\thanks{
Wellesley College, Wellesley, MA 02481 USA. 
Email:~\texttt{\{smccaule, shikha.singh\}@wellesley.edu}.}
\and
Shikha Singh\footnotemark[5]  
}

\setcounter{footnote}{-1}

\newcommand{\realtitle}{Bloom Filters, Adaptivity, \\and the Dictionary Problem}

\date{}
\title{\realtitle \footnote{
This research was supported in part by NSF grants
CCF 1114809,
CCF 1217708,
CCF 1218188,
CCF 1314633,
 CCF 1637458, 
IIS 1247726, 
IIS 1251137, 
CNS 1408695, 
CNS 1408782, 
CCF 1439084, 
CCF-BSF 1716252, 
CCF 1617618, 
IIS 1541613,
and 
CAREER Award CCF 1553385, 
as well as NIH grant 1U01CA198952-01,  
by the European Research Council under the European Union's 7th Framework Programme (FP7/2007-2013)~/~ERC grant agreement no. 614331,
by Sandia National Laboratories, EMC, Inc, and  NetAPP, Inc.
BARC, Basic Algorithms Research Copenhagen, is supported by the VILLUM Foundation grant 16582. 
}
}

\sloppy 
\pagestyle{plain} 

\begin{document} 
\maketitle

\begin{abstract}

  An {approximate membership query}
  \boldmath
  data structure ({AMQ})---such as a Bloom, quotient, or cuckoo
  filter---maintains a compact, probabilistic representation of a set
  $\calS$ of keys from a universe~$\calU$. It supports lookups and
  inserts.  Some AMQs also support deletes.
  A query for $x\in\calS$ returns \present.  A query for 
  $x\not\in\calS$ returns \present with a tunable
  \defn{\fpprob}~$\epsilon$, and otherwise returns \absent.

  AMQs are widely 
used to speed up dictionaries that are stored
  remotely (e.g., on disk or across a network).  The AMQ is stored
  locally (e.g., in memory).   The remote dictionary is only 
accessed
  when the AMQ returns \present.
  Thus, the primary performance metric of an AMQ is how often it
  returns \absent for negative queries.

  Existing AMQs offer weak guarantees on the number of false positives
  in a sequence of queries.  The false-positive
  probability $\epsilon$ holds only for a single query.  It is easy
  for an adversary to drive an AMQ's false-positive rate towards 1 by
  simply repeating false positives.

  This paper shows what it takes to get strong guarantees on the
  number of false positives.  We say that an AMQ is \defn{adaptive} if
  it guarantees a false-positive probability of $\epsilon$ for every
  query, \emph{regardless of answers to previous queries}.


We establish upper and lower bounds for adaptive AMQs.  Our lower
bound shows that it is impossible to build a small adaptive AMQ, even
when the AMQ is immediately told whenever a query is a false positive.
On the other hand, we show that it is possible 
to
maintain an AMQ that uses the same amount of local space as a
non-adaptive AMQ (up to lower order terms), performs all queries and
updates in constant time, and guarantees that each negative query
to the dictionary accesses remote storage with probability $\epsilon$,
independent of the results of past queries.  Thus, we show that
adaptivity can be achieved effectively for free.

\end{abstract}


\sloppy

\section{{\bf I{\small NTRODUCTION}}}
\label{sec:intro}  

  An {approximate membership query}
  data structure ({AMQ})---such as a Bloom~\cite{Bloom70,BroderMi04}, quotient~\cite{BenderFaJo12,PandeyBeJo17}, single hash~\cite{PaghPaRa05}, or cuckoo~\cite{FanAnKa14} filter---maintains a compact, probabilistic representation of a set
  $\calS$ of keys from a universe~$\calU$. 
  It supports lookups and
  inserts.  Some AMQs also support deletes.
  A positive query for $x\in\calS$ returns \present.  A negative
  query for $x\not\in\calS$ returns \present with a tunable
  \defn{\fpprob}~$\epsilon$, and otherwise returns \absent.

AMQs are used because they are small.
An optimal AMQ can encode
a set $\calS \subseteq \calU$, where $|\calS|=n$ and $|\calU|=u$, with
a false-positive probability $\epsilon$ using
$\Theta(n \log (1/\epsilon))$ bits~\cite{CarterFlGi78}. In contrast,
an error-free representation 
of $\calS$ takes
$\Omega(n\log u)$ bits.

One of the main uses of AMQs is
to speed up dictionaries~\cite{ZhuJiWa04, 
  DengRa06, 
  EppsteinGoMi2017, 
  CohenMa03, 
  TarkomaRoLa12,%
  farachcoltonfm2009,%
  BroderMi04}.  
Often, there is not enough local storage
(e.g., RAM) to store the dictionary's internal state, $\dict$. 
Thus, $\dict$ must be maintained remotely
(e.g., on-disk or across a network), and accesses to $\dict$ are
expensive.  By maintaining a 
local AMQ for the set $\calS$ of keys occurring in $\dict$, the dictionary can avoid accessing $\dict$ on
most negative queries: if the AMQ says that a key is not in $\calS$,
then no query to $\dict$ is necessary.  

Thus, the primary performance metric of an AMQ is how well it enables a
dictionary to avoid these expensive accesses to $\dict$.  The fewer false positives an
AMQ returns on a sequence of queries, the more effective it is.

\pparagraph{AMQ guarantees.}  
Existing AMQs offer weak guarantees on the number of false positives they will
return for a sequence of queries.  The
false-positive probability of $\epsilon$ holds only for a single query.
It does not  extend to multiple queries, because queries can be
correlated.  It is easy for an adversary to drive
an AMQ's false-positive rate towards 1 by simply repeating false-positives.

Even when the adversary is oblivious, i.e., it 
 selects $n$ queries without regard to the results of
previous queries, existing AMQs have weak guarantees.
With probability $\epsilon$, a random query is a false positive, and
repeating it $n$ times results in a false-positive rate of 1. Thus, 
even when the adversary is oblivious, existing AMQs can have $O(\epsilon n)$
false positives in expectation but not with high probability. 
This distinction has implications: 
Mitzenmacher et al.~\cite{MitzenmacherPo17} show that on network traces, 
existing AMQs are suboptimal because they do not adapt to false
positives. 


\pparagraph{Adaptive AMQs.}  We define an \defn{adaptive AMQ} to be an
AMQ that returns \present with probability at most $\epsilon$ for
every negative query, \emph{regardless of answers to previous queries}.
For a dictionary using an adaptive AMQ, \emph{any} sequence of $n$
negative queries will result in $O(\epsilon n)$ false positives,
with high probability.  This gives a strong bound on the number of (expensive) negative
accesses that the dictionary will need to make to $\dict$.  This is true
even if the queries are selected by an adaptive adversary.


Several attempts have been made to move towards adaptivity (and beyond oblivious adversaries).
Naor and Yogev~\cite{NaorYo15} considered an adaptive adversary that
tries to increase the false-positive rate by discovering
collisions in the AMQ's hash functions, but they explicitly forbade
the adversary from repeating queries. 
Chazelle et al.~\cite{ChazelleKiRu04}
introduced bloomier filters, which can be updated to specify a white
list, which are elements in $\calU-\calS$ on which the AMQ may not
answer \present.  However, bloomier filters are space efficient only
when the white list is specified in advance, which makes them unsuitable
for adaptivity. 
Mitzenmacher et al.~\cite{MitzenmacherPo17}
proposed an elegant variant of the cuckoo filter that stores part of
the AMQ locally and part of it remotely in order to try to achieve
adaptivity.  They empirically show that their data structure helps
maintain a low false-positive rate against queries that have temporal
correlation.

However, no existing AMQ is provably adaptive.

\pparagraph{Feedback, local AMQs, and remote representations.}  When
an AMQ is used to speed up a dictionary, the dictionary always detects which are the AMQ's false positives and which are the true positives. 
%
Thus, the dictionary can provide this feedback to the AMQ. This feedback is free because 
it does not require any additional accesses to $\dict$ beyond what was used to answer the query.
%


In this paper we show that, even with this feedback, it is impossible to construct an
adaptive AMQ that uses less than $\Omega(\min\{n\log\log u,n\log n\})$ bits of space;
see \thmref{lowerfinal}.
That is, even if an AMQ is told  which are the true and false
positives, adaptivity requires 
large space.


This lower bound would appear
to kill the whole idea of adaptive AMQs, since one of the key
ideas of an AMQ is to be small enough to fit in local storage. 
Remarkably, 
%
%
 efficient adaptivity is still achievable.

The way around this impasse is to partition an AMQ's state into a small local state $\loc$ and a larger remote state $\rem$.
The AMQ can still have good
performance, provided it access the remote state 
infrequently.


We show how to make an adaptive AMQ that consumes no more local space than the best non-adaptive AMQ (and much less than a Bloom filter).   We call this data structure a \defn{broom filter} (because it cleans up its mistakes).
%
The broom filter accesses $\rem$ only when the AMQ receives feedback that
it returned a false positive.  

When used to filter accesses to a remote dictionary $\dict$, the AMQ's
accesses to $\rem$ are ``free''---i.e. they do not asymptotically
increase the number of accesses to remote storage----because the AMQ
access $\rem$ only when the dictionary accesses $\dict$.

Our lower bound shows that partitioning is essential to creating a
space-efficient adaptive AMQ. Indeed, the adaptive cuckoo filter of
Mitzenmacher et al.~\cite{MitzenmacherPo17} 
also partitions its state into local and remote components, but it does not have
the strong theoretical adaptivity guarantees of the broom filter.

The local component, $\loc$, of the broom filter is itself a non-adaptive AMQ
plus $O(n)$ bits for adaptivity.  The purpose of $\rem$ is to
provide a little more information to help $\loc$ adapt.   

Thus, we have a dual view of adaptivity that helps us interpret the
upper and lower bounds.  The local representation $\loc$ is an AMQ in
its own right.  The remote representation $\rem$ is an ``oracle'' that
gives extra feedback to $\loc$ whenever there is a false positive.
Because $\rem$ is simply an oracle, all the heavy lifting is in the
design of $\loc$. In the broom filter, $\rem$ enables $\loc$ to
identify an element $y\in\calS$ that triggered the false positive. 

%
%

Putting these results together, we pinpoint how much information is
needed for an adaptive AMQ to update its local information.  The lower
bound shows that simply learning if the query is a false positive is
not sufficient. But if this local information is augmented with
asymptotically free remote lookups, then adaptivity is achievable.

\pparagraph{A note on optimality.}  The broom filter dominates
existing AMQs in all regards.  Its local state by itself is an
optimal conventional AMQ: it uses optimal space up to lower-order
terms, and supports queries and updates in constant time with high probability.  Thus the remote state is only for adaptivity.
For comparison, a Bloom filter has a lookup time of
$O(\log\frac{1}{\epsilon})$, the space is suboptimal, and the filter
does not support deletes.  More recent AMQs~\cite{PaghPaRa05, BenderFaJo12, FanAnKa14, PandeyBeJo17} also fail to match the
broom filter on one or more of these criteria, even leaving aside
adaptivity.  Thus, we show that adaptivity has no cost.



\newcommand{\obj}{}

\algblockdefx{Method}{EndMethod}%
             [2]{\textbf{method} \textsc{#1}(#2)}%
             {}
\algnotext{EndMethod}

\algblockdefx{game}{Endgame}%
             [2]{\textbf{game} \textsc{#1}(#2)}%
             {}
\algnotext{Endgame}
             
\begin{figure*}[t]
  \centering
  \hspace{-.275in}
  \begin{subfigure}[t]{2.39in}
    \begin{algorithmic}
      \game{adaptivity-game}{$\calA,n,\epsilon$}
        \State $\calO \gets$ \Call{setup}{$n$, $\epsilon$}
        \State $x'\gets \calA^\calO(n,\epsilon)$
        \State $b\gets\calO.\amqlookup[x']$
        \State\Return $(b=\present) \wedge (x'\not\in\calO.\calS)$
      \Endgame
    \end{algorithmic}
  \end{subfigure}
  \hspace{-.2in}
  \begin{subfigure}[t]{2.07in}
    \begin{algorithmic}
      \Function{setup}{$n$,$\epsilon$}
        \State $\calO.\rho\stackrel{\$}{\gets}\{0,1\}^{\mathbb{N}}$
        \State $\calO.\calS\gets\emptyset$
        \State $(\calO.\loc,\calO.\rem)\gets$\amqbuild[$n,\epsilon,\calO.\rho$]
        \State \Return \calO
      \EndFunction
    \end{algorithmic}
  \end{subfigure}
  \hspace{-.1in}
  \begin{subfigure}[t]{2.3in}
    \begin{algorithmic}
      \Method{\calO.\amqlookup}{$x$}
        \State $(\obj\loc,b)\gets$\amqlookup[$\obj\loc,x,\obj\rho$]
        \If{$(b=$\present) $\wedge (x\not\in\obj{}\calS)$}
          \State $(\obj\loc,\obj\rem)\gets$\amqupdate[$(\obj\loc,\obj\rem),x,\obj\rho$]
          \EndIf
        \State\Return $b$
      \EndMethod
    \end{algorithmic}
  \end{subfigure}  \vspace{.13in}\\
  \begin{subfigure}[t]{2.4in}
    \begin{algorithmic}
      \Method{\calO.\amqinsert}{$x$}
        \If{$|\calS| < n \wedge x\not\in \calS$}
          \State $(\obj\loc,\obj\rem)\gets$\amqinsert[$(\obj\loc,\obj\rem),x,\obj\rho$]
          \State $\obj{}\calS\gets\obj{}\calS \cup \{x\}$
        \EndIf
      \EndMethod
    \end{algorithmic}
  \end{subfigure}
  \hspace{.1in}
  \begin{subfigure}[t]{2.4in}
    \begin{algorithmic}
      \Method{\calO.\amqdelete}{$x$}
        \If{$x\in \calS$}
          \State $(\obj\loc,\obj\rem)\gets$\amqdelete[$(\obj\loc,\obj\rem),x,\obj\rho$]
          \State $\obj{}\calS\gets\obj{}\calS \setminus \{x\}$
        \EndIf
      \EndMethod
    \end{algorithmic}
  \end{subfigure}
  \caption{Definition of the game between an adaptive AMQ and an
    adversary $\calA$.  The adversary gets $n$, $\epsilon$, and
    oracular access to $\calO$, which supports three operations:
    \calO.\amqlookup, \calO.\amqinsert, and \calO.\amqdelete.  The
    adversary wins if, after interacting with the oracle, it outputs
    an element $x'$ that is a false positive of the AMQ.  An AMQ is
    adaptive if there exists a constant $\epsilon<1$ such that no adversary
    wins with probability greater than $\epsilon$.}
  \label{fig:adaptivity-game}
\end{figure*}

\section{{\bf P{\small RELIMINARIES}}}
\seclabel{model}



We begin by defining the operations that our AMQ supports.  These
operations specify when it can access its local and remote states,
when it gets to update its states, how it receives feedback, and basic
correctness requirements (i.e., no false negatives).  We define
performance constraints (i.e., false-positive rates) later.

\newcommand{\todo}[1]{{\color{red} \scriptsize \sf #1}}


\begin{definition}[\defn{AMQs}]
  \deflabel{amqinternal}
  \deflabel{amq}%

  An approximate membership query data structure (AMQ) consists of the
  following deterministic functions. Here $\rho$ denotes the AMQ's
  private infinite random string, $\loc$ and $\rem$ denote its private
  local and remote state, respectively, $\calS$ represents the set of items
  that have been inserted into the AMQ more recently than they have
  been deleted, $n$~denotes the maximum allowed set size, and $\epsilon$
  denotes the false-positive probability.
  \begin{itemize}
  \item $\amqbuild[n,\epsilon,\rho]\longrightarrow (\loc,\rem)$.
    \amqbuild creates an initial state $(\loc,\rem)$.
  \item $\amqlookup[\loc,x,\rho] \longrightarrow (\loc',b)$. For
    $x\in\calU$, \amqlookup returns a new local state $\loc'$ and
    $b\in\{\present,\absent\}$.  If $x\in \calS$, then $b=\present$
    (i.e., AMQs do not have false negatives).  \amqlookup does not get
    access to $\rem$.
  \item \amqinsert[$(\loc,\rem),x,\rho$] $\longrightarrow
    (\loc',\rem')$.  For $|\calS|<n$ and $x \in \calU\setminus \calS$,
    \amqinsert returns a new state $(\loc',\rem')$.
    \amqinsert is not defined for $x\in\calS$.  \amqdelete is defined analogously.
  \item \amqupdate[$(\loc,\rem),x,\rho$] $\longrightarrow
    (\loc',\rem')$. For $x\not\in \calS$ such that
    $\amqlookup[\loc,x,\rho]=\present$, \amqupdate returns a new state
    $(\loc',\rem')$.
  \end{itemize}
\end{definition}

An AMQ is \defn{local} if it never reads or writes $\rem$;
an AMQ is \defn{oblivious} if
\amqupdate is the identity function on $(\loc,\rem)$.
Bloom filters, cuckoo filters, etc, are local oblivious AMQs.

\pparagraph{False positives and adaptivity.}  We say that $x$ is a
\defn{false positive} of AMQ state $(\loc,\rem)$ if $x\not\in \calS$ but
$\amqlookup[\loc,x,\rho]$ returns \present.

We define an AMQ's false-positive rate using the adversarial game in
\figref{adaptivity-game}.  In this game, we give the adversary
access to the AMQ via an oracle \calO.  The oracle keeps track of the
set $\calS$ being represented by the AMQ and ensures that the
adversary respects the limits of the AMQ (i.e., never overloads the
AMQ, inserts an item that is already in $\calS$, or deletes an item
that is not currently in $\calS$).  The adversary can submit queries
and updates to the oracle, which applies them to the AMQ and calls
\amqupdate whenever \amqlookup returns a false positive.  The
adversary cannot inspect the internal state of the oracle.
\adaptivityGame outputs \True iff the adversary wins, i.e.,  if,
after interacting with the oracle, $\calA$ outputs a false positive
$x'$ of the final state of the AMQ.


The \defn{static false-positive rate} is the probability, taken
over the randomness of the AMQ, that a particular $x\in\calU\setminus\calS$ is
a false positive of the AMQ.  This is equivalent to the probability
that an adversary that never gets to query the AMQ is able to output
a false positive.  We formalize this as follows.
An adversary is a \defn{single-query adversary} if it
never invokes \calO.\amqlookup.  We call this ``single-query'' because
there is still an invocation of $\calO.\amqlookup$ at the very end of the
game, when \adaptivityGame tests whether $x'$ is a false positive.

\begin{definition}
  An AMQ supports \defn{static false-positive rate $\epsilon$} if
  for all $n$ and all single-query adversaries $\calA$,
  \[
  \Pr[\adaptivityGame[\calA,n,\epsilon]=\True]\leq\epsilon.
  \]
\end{definition}

\begin{definition}
  An AMQ supports \defn{sustained false-positive rate $\epsilon$} if
  for all $n$ and all adversaries $\calA$,
  \[
  \Pr[\adaptivityGame[\calA,n,\epsilon]=\True]\leq\epsilon.
  \]
\end{definition}

An AMQ is \defn{adaptive} if there exists a constant $\epsilon<1$ such
that the AMQ guarantees a \fprate of at most $\epsilon$.

The following lemma shows that, since an adaptive AMQ accesses its remote
state rarely, it must use as much local space as a local AMQ.

\begin{lemma}\label{lem:adaptive-local-bits}
Any adaptive AMQ must have a local representation $\loc$ of size at
least $n\log (1/\epsilon)$. 
\end{lemma}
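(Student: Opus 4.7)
The plan is to reduce this bound to the classical information-theoretic lower bound of Carter et al.~\cite{CarterFlGi78}, which states that any static approximate set representation with false-positive probability~$\epsilon$ requires at least $n\log(1/\epsilon)$ bits. The crucial structural property I would exploit, directly from \defref{amq}, is that \amqlookup takes only $\loc$, $x$, and the random tape~$\rho$ as input---it does \emph{not} receive $\rem$. Hence, for any fixed $\rho$, whether the AMQ returns \present is a deterministic function of $\loc$ and $x$ alone, so $\loc$ must carry essentially all query-relevant information for bounding a single lookup's false-positive probability.

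Next, I would argue that any adaptive AMQ, which by definition satisfies the sustained false-positive rate $\epsilon$, also satisfies the static false-positive rate $\epsilon$: the single-query adversary that inserts any set $\calS$ of size $n$ and then outputs a chosen $x\not\in\calS$ is a legal adversary in \adaptivityGame. Thus, after the $n$ insertions corresponding to any $\calS$ of size~$n$, the resulting $\loc$ satisfies $\Pr_{\rho}[\amqlookup[\loc,x,\rho]=\present]\le\epsilon$ for every $x\in\calU\setminus\calS$.

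I would then apply a Yao-style averaging argument to fix $\rho$ to a particular $\rho^*$ so that the static guarantee continues to hold on a constant fraction of input sets. Under this fixed $\rho^*$, every reachable value of $\loc$ deterministically encodes an ``accept set'' $A(\loc)\subseteq\calU$ of size at most $n+\epsilon(u-n)$ that must contain the corresponding~$\calS$. Because the family of such accept sets must cover essentially all $\binom{u}{n}$ input sets of size $n$, and each accept set of size $n+\epsilon(u-n)$ contains at most $\binom{n+\epsilon(u-n)}{n}$ of them, the standard counting argument underlying the Carter et al.~bound forces $\loc$ to take at least $(1/\epsilon)^{n-o(n)}$ distinct values, i.e., $|\loc|\ge n\log(1/\epsilon)$ bits.

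The main obstacle is the reduction step itself: \amqlookup is permitted to mutate $\loc$, and the tape $\rho$ is shared across lookups, inserts, and updates, so one must be careful when fixing $\rho^*$ to produce a genuinely static approximate-membership data structure. The key observation that resolves this is that only the single terminal lookup on the adversary's output $x'$ in \adaptivityGame is relevant to bounding the false-positive probability in the static game, so any mutation of $\loc$ during earlier lookups is immaterial. Once this is established, the remainder of the argument is an off-the-shelf application of the Carter et al.~counting lower bound.
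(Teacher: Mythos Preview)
Your proposal is correct and follows essentially the same route as the paper: observe that \amqlookup reads only $\loc$ (not $\rem$), argue that the sustained rate implies the static rate for the local state at the moment the adversary outputs $x'$, and then invoke the Carter et al.\ lower bound. The only difference is granularity---the paper cites the classical bound as a black box in two lines, whereas you unpack the Yao-averaging and counting steps explicitly; the underlying argument is the same.
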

\begin{proof}
Consider an adaptive AMQ with a sustained false positive rate of
$\epsilon$.  Consider the local state $\loc'$ at the time when the
adversary provides $x'$.  By the definition of sustained-false positive
rate, $\loc'$ must have a static false positive rate of at most
$\epsilon$.  Thus, by the Bloom-filter lower
bound~\cite{CarterFlGi78,LovettPorat10}, $\loc'$ must have size at
least $n\log (1/\epsilon)$.
\end{proof}

\pparagraph{Cost model.}  We measure AMQ performance in terms
of the RAM operations on $\loc$ and in terms of the
number of 
updates and queries to the remote representation $\rem$.  We measure
these three quantities (RAM operations, remote updates, and remote
queries) separately.

We follow the standard practice of analyzing AMQ performance in terms
of the AMQ's maximum capacity, $n$.%
We
assume a word size $w = \Omega(\log u)$ in most of
the paper. For simplicity of presentation, we assume that $u=\poly(n)$
 but our results generalize.

\pparagraph{Hash functions.}\seclabel{hashing}
We assume that the adversary cannot find a
never-queried-before element that is a false positive of the AMQ with
probability greater than $\epsilon$. 
Ideal hash functions have this property for arbitrary adversaries.  If
the adversary is polynomially bounded, one-way functions are
sufficient to prevent them from generating new false
positives~\cite{NaorYo15}.  



\section{{\bf R{\small ESULTS}}}

We prove the following lower bound on the space required by an AMQ to maintain adaptivity.

\begin{restatable}{theorem}{lowerfinal} 
  \thmlabel{lowerfinal}
  Any adaptive AMQ storing a set of size $n$ from a universe of size
  $u > n^4$ requires $\Omega(\min\{n\log n,n\log\log u\})$ bits of
  space whp to maintain any constant \fprate $\epsilon < 1$.
\end{restatable}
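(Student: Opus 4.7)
The plan is to build an adversary for \adaptivityGame that exploits the automatic feedback from \amqupdate to force the AMQ to encode many distinguishable ``adapted'' configurations in its local-plus-remote state. The adversary first inserts a uniformly random $n$-element set $\calS \subseteq \calU$ using \amqinsert, then runs $T$ discovery rounds: in each round it queries fresh uniformly random elements of $\calU \setminus \calS$ until \calO.\amqlookup returns \present, records that element as $y_i$, and notes that the oracle has invoked \amqupdate on it. At the end the adversary outputs the $x'$ from $\{y_1, \dots, y_T\}$ that maximizes the chance of still being a false positive of the final state $\sigma_T$.

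For adaptivity to hold we need $\Pr[y_i \in F(\sigma_T)] \le \epsilon$ for every $i$, where $F(\sigma)$ denotes the false-positive set determined by the local part of $\sigma$. Summed over $i$, at most $\epsilon T$ of the $y_i$ can survive in $F(\sigma_T)$ in expectation, so $(1-\epsilon) T$ of the adaptations must write enough information into $(\loc, \rem)$ to permanently exclude $y_i$ while preserving the static rate $|F(\sigma)| = \Theta(\epsilon u)$. I model this by assigning each inserted $x_i$ a refinement depth $d_i \in \{0, \dots, \lceil \log u \rceil\}$ counting how often the AMQ has had to sharpen the fingerprint of $x_i$ in response to a colliding query. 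Because the adversary's queries are uniform and independent of $\rho$, each successful round refines exactly one $d_i$, and refinements spread across $\calS$ in a balanced way. After $T = \Theta(n \log u)$ rounds the realized depth vector $(d_1, \dots, d_n)$ can take $\Omega((\log u)^n)$ values that the AMQ must distinguish, forcing $\Omega(n \log \log u)$ bits of state. When $u$ is so large that $\log u \gtrsim n$ the depth encoding saturates, and I fall back to an independent counting step showing that with only $o(n \log n)$ bits the AMQ cannot simultaneously represent adapted configurations drawn from a structured subfamily of size $n^{\Omega(n)}$. Taking the minimum yields the claimed bound, with the high-probability statement coming from standard Chernoff concentration on the number of false positives found in $T$ rounds.

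The main obstacle is turning these entropy counts into an actual failure of adaptivity, rather than merely an upper bound on state information. Given state size $B = o(\min\{n \log n, n \log \log u\})$, I need to exhibit an $x'$ that is a false positive of $\sigma_T$ with probability strictly greater than $\epsilon$ over $\rho$ and the adversary's random bits. The intended approach is a two-level pigeonhole: a small state space must collapse many distinct transcripts to a single $\sigma_T$; by choosing the transcripts so the adapted $y_i$'s are nearly disjoint across transcripts, the common $F(\sigma_T)$ would have to exclude their union, which is incompatible with $|F(\sigma_T)| = \Theta(\epsilon u)$ unless some previously adapted $y_i$ ``leaks back in'' and is available for the adversary to output. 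Making this argument robust against AMQs that use $\rem$ cleverly (so that $\rem$ cannot secretly shoulder the adaptive work while keeping $\loc$ small), and averaging over $\rho$ so that the adversary's winning probability strictly exceeds $\epsilon$ rather than merely holding in expectation, is the technically most delicate step of the proof.
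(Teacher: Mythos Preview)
Your plan has a genuine structural gap: the ``refinement depth'' model $(d_1,\dots,d_n)$ presupposes that every false positive is attributable to a unique colliding element of $\calS$ and that adapting increments a per-element counter. That is how the broom filter happens to work, but an arbitrary adaptive AMQ need not have any such structure---it could respond to a false positive by rehashing everything, by swapping in a new encoding, or by any state transition whatsoever. Consequently the claims that ``each successful round refines exactly one $d_i$'' and that ``refinements spread across $\calS$ in a balanced way'' are assumptions about the AMQ, not consequences of the game, and the entropy count $\Omega((\log u)^n)$ does not follow. You also never justify why distinct depth vectors must correspond to distinct AMQ states; without that the pigeonhole step is empty.

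The paper avoids this by working only with behaviorally defined objects. It introduces the set $\fix(\lmem,\calS,\rho)$ of original absolute false positives that have been eliminated in state $\lmem$; by definition two states with different fix sets are different, so counting fix sets lower-bounds the state space. The attack is also different: rather than fresh queries each round, the adversary repeatedly re-queries the surviving false positives from a single random $Q$ of size $n$, which forces the AMQ to fix $\Omega(n\epsilon_0/\log_\epsilon\epsilon_0)$ of them simultaneously (\lemref{winnowing}) using only $O(n)$ queries total, not $\Theta(n\log u)$. The piece you are missing most is the analogue of \lemref{probfixabs}: an AMQ cannot fix an element $y$ unless it has queried something in a witness set for $y$, so a random $y$ is fixed with probability at most $n^2/u$. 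This caps the total fix set at $O(n^2)$ and lets one count distinct forced subsets via $\binom{u\epsilon_0}{\Omega(n\epsilon_0')}\big/\bigl(\binom{n}{\Omega(n\epsilon_0')}\binom{O(n^2)}{\Omega(n\epsilon_0')}\bigr)$, from which the bound follows by taking logs. Your two-level pigeonhole sketch is heading in this direction, but without the witness-set observation you have no control over how many elements the AMQ can fix ``for free,'' and the argument does not close.
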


Together, \defref{amqinternal}, \thmref{lowerfinal} and
\lemref{adaptive-local-bits} suggest 
what an optimal adaptive AMQ should look like.
\lemref{adaptive-local-bits} says that $\loc$ must have at least
$n\log (1/\epsilon)$ bits.  
\thmref{lowerfinal}
implies that any adaptive AMQ with $\loc$ near this lower bound must
make remote accesses.

A consequence of \defref{amqinternal} is
that AMQs access $\rem$ only when the system is accessing $\dict$, so,
if an AMQ performs $O(1)$ updates of $\rem$ for each update of
$\dict$ and $O(1)$ queries to $\rem$ for each query to $\dict$, then
accesses to $\rem$ are asymptotically free.   Thus, our target is an
AMQ that has approximately $n\log(1/\epsilon)$ bits in $\loc$ and
performs $O(1)$ accesses to $\rem$ per update and query.

Our upper bound result is such an adaptive AMQ:

\begin{theorem}
  \thmlabel{broom-performance}
  There exists an adaptive AMQ---the broom filter---that,  for any \fprate
  $\epsilon$ and maximum capacity $n$,   
  attains the following performance: 
  \begin{itemize}
    \item \textbf{Constant local work:} \label{item:broom-local-work} $O(1)$ 
    operations for inserts,  deletes, and lookups w.h.p.
  \item \textbf{Near optimal local space:} \label{item:broom-space}
    $(1 + o(1)) n \log\frac{1}{\epsilon} + O(n)$ local space w.h.p.\footnote{All logarithms in this paper
are base $2$ unless specified otherwise.}
  \item \textbf{Asymptotically optimal remote
      accesses:} \label{item:broom-remote-access}  $O(1)$ updates to
    $\rem$ for each delete to $\dict$; $O(1)$ updates to $\rem$ with
    probability at most $\epsilon$ for each insertion to $\dict$;
    $O(1)$ updates to $\rem$ for each false positive.

  \end{itemize}
\end{theorem}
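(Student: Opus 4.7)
The plan is to construct the broom filter as a quotient-filter-like structure in which each stored element's fingerprint can be adaptively lengthened by a few bits in response to observed false positives. Fix a (sufficiently random) hash $h : \calU \to \{0,1\}^w$. For each $y \in \calS$, the local state $\loc$ stores a variable-length prefix of $h(y)$: the top $\log n$ bits are absorbed into the slot index in standard quotienting fashion, and the explicit remainder is initialized to $\log(1/\epsilon)$ bits, plus possibly a few extension bits appended over time. The remote state $\rem$ stores the actual keys (or enough of their hash bits) indexed so that, given a current fingerprint, one can recover the responsible $y$ in $O(1)$ remote accesses and ask for more bits of $h(y)$ on demand.

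First I would handle the local-work and local-space bounds. For constant-time \amqlookup, \amqinsert, \amqdelete, I would use a succinct rank/select layout over per-slot metadata bits (as in existing quotient and single-hash filters), extended to accommodate variable-length remainders via an auxiliary $O(n)$-bit directory that records the length of each stored fingerprint; locality of runs ensures that lookups, inserts, and deletes each touch $O(1)$ words w.h.p. For local space, the base fingerprints occupy $n(\log(n/\epsilon) - \log n) + O(n) = n\log(1/\epsilon) + O(n)$ bits by the standard quotienting accounting, and the length directory adds $O(n)$ bits; the remaining ingredient is a w.h.p.\ $O(n)$ bound on the total length of all extensions, deferred to the last paragraph.

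Next I would establish the sustained false-positive rate of $\epsilon$. On a false positive $x$, \amqupdate queries $\rem$ to retrieve the unique $y \in \calS$ whose current fingerprint matched $x$, fetches enough additional bits of $h(y)$ to find the first position at which $h(y)$ and $h(x)$ disagree, and appends $y$'s fingerprint up to that position. After this step, $x$ is no longer a false positive, so any repetition of $x$ cannot trigger a remote access. For a fresh query $x'$ the adversary has never submitted before, the bits of $h(x')$ beyond the length of any stored fingerprint are independent of all past interactions, so the probability that $x'$ collides with the current fingerprint of some $y \in \calS$ is at most $\sum_{y \in \calS} 2^{-\ell(y)} \le n \cdot (\epsilon/n) = \epsilon$, since each $\ell(y) \ge \log(n/\epsilon)$. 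Combining the two cases yields sustained FP rate $\epsilon$ against any adversary in the game of \figref{adaptivity-game}.

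The main obstacle is bounding the total extension length by $O(n)$ w.h.p., because an unrestricted adversary could in principle force many adaptations. Each adaptation adds $1 + \mathrm{Geom}(1/2)$ bits in expectation, so after $k$ adaptations the extensions total $O(k)$ bits w.h.p.\ by a Chernoff-style concentration bound on sums of truncated geometric random variables. I would control $k$ by layering a periodic rebuild: whenever the aggregate extension length exceeds $c n$ for a suitable constant $c$, the filter draws a fresh hash and rebuilds $\loc$ by streaming through $\rem$. Such a rebuild costs $O(n)$ local work and $O(n)$ remote accesses, but is triggered only after $\Omega(n)$ adaptations, each of which already paid $O(1)$ remote accesses; the per-false-positive and per-insert amortized remote-access bounds of the theorem are therefore preserved. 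Because the fresh hash is independent of all past queries, the single-query bound of the previous paragraph carries over across rebuilds, so the sustained FP rate remains $\epsilon$. Showing that the amortization goes through even under an adaptive adversary, and that the extension-length tail is subexponential so that standard high-probability concentration applies, is the most delicate part of the argument.
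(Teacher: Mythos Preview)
Your high-level architecture matches the paper's: a quotient-filter-style structure storing variable-length prefixes of $h(y)$, extended on false positives via a remote lookup, with periodic rehashing to reclaim adaptivity bits. But there are two genuine gaps that break correctness, not just elegance.

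\textbf{Deletions defeat your sustained-rate argument.} Your two-case analysis (``repeated $x$ is already fixed'' versus ``fresh $x'$ has independent hash bits'') ignores the interaction between \amqdelete, \amqinsert, and prior adaptations. Suppose $x$ is a false positive colliding with $y\in\calS$; you extend $p(y)$ so $x$ is fixed. Now the adversary deletes $y$ and reinserts it. Under your scheme the reinserted $y$ gets a fresh baseline fingerprint of length $\log(n/\epsilon)$, so $x$ is once again a false positive, and the adversary can iterate this to drive the false-positive rate to~$1$ on a single element. The paper devotes \secref{deletions-adaptivity-bits} to exactly this attack: on deletion it retains the quotient and adaptivity bits of the deleted element as a ``ghost,'' and a reinserted element inherits any matching ghost's adaptivity bits. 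This yields \propref{inherit-adaptivity-bits}, which is what actually makes the sustained-rate proof (\lemref{sustained-fp-rate}) go through. Without an analogous mechanism, your claim ``any repetition of $x$ cannot trigger a remote access'' is simply false in the model of \figref{adaptivity-game}, which allows deletes.

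\textbf{Uniqueness of the colliding $y$ is not established.} You repeatedly invoke ``the unique $y\in\calS$ whose current fingerprint matched $x$,'' but nothing in your construction guarantees uniqueness. If two stored fingerprints are both prefixes of $h(x)$, \amqupdate does not know which to extend, and extending one leaves $x$ a false positive via the other. The paper enforces \invref{hash-invariant} (no fingerprint is a prefix of another) and maintains it \emph{on insertions as well}, calling \mextend\ when a newly inserted element would violate it. This is also what makes the ``$O(1)$ remote updates with probability at most $\epsilon$ per insertion'' clause of the theorem come out: the probability an insertion needs a remote access is exactly the probability of a hard collision. Your proposal neither states this invariant nor accounts for the insertion-time extensions it requires, so both the correctness of \amqupdate and the per-insertion remote-access bound are unsupported.

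Two smaller points: your rebuild is amortized whereas the paper deamortizes via a two-hash frontier (\secref{reclaiming-bits}), which matters for the per-operation $O(1)$ claims; and ``locality of runs ensures $O(1)$ words'' is not true of a vanilla quotient filter when $\log(1/\epsilon)=\omega(1)$, which is why the paper splits into small- and large-remainder regimes with separate two-level constructions in \secref{broom-cases}.
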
  

The local component of the broom filter is, itself, an AMQ with
performance that strictly dominates the Bloom Filter, which requires
$(\log e) n \log (1/\epsilon)$ space and $O(\log (1/\epsilon))$ update
time~\cite{Bloom70}, and matches (up to lower-order
terms) or improves upon the performance of more efficient
AMQs~\cite{PaghPaRa05,FanAnKa14,BenderFaJo12,Porat09}.

Since $\loc$ contains an AMQ, one way to interpret our results is that
a small local AMQ cannot be adaptive if it is only informed of true
positives versus false positives, but it can adapt if it is given a
little more information.  In the case of the broom filter, it is given
the element of $S$ causing a false positive, that is, the element in
$S$ that has a hash function collision with the query, as we see next.

\section{{\bf B{\small ROOM} F{\small ILTERS}: D{\small EFINING} F{\small INGERPRINTS}}}
\seclabel{broom-fingerprints}

The broom filter is a single-hash-function AMQ
\cite{PaghPaRa05,BenderFaJo12,FanAnKa14}, which means that it 
stores fingerprints for each element in $\calS$. In this section, 
we begin our proof of \thmref{broom-performance} by describing what
fingerprints we store and how they  
establish the sustained false-positive rate of broom filters.  In
\secref{broom-cases}, we show how to maintain the fingerprints space-efficiently and 
in $O(1)$ time.

\subsection{\bf\emph{Fingerprints}}

The broom filter has a hash function
$h: \calU\rightarrow \{ 0,\ldots n^c \}$ for some constant
$c\geq 4$. 
Storing an entire hash takes $c\log n$ bits, which is too much
space---we can only afford approximately
$\log(1/\epsilon)$ bits per element.
Instead, for set $\calS = \{y_1,y_2,\ldots,y_n\}$, 
the broom filter stores a set of \defn{fingerprints}
$\calP = \{p(y_1), p(y_2),\ldots, p(y_n) \}$, where each $p(y_i)$
is a \defn{prefix} of $h(y_i)$, denoted
$p(y_i) \prefix h(y_i)$. 

\pparagraph{Queries.}  
A query for $x$ returns \present iff there
exists a $y\in\calS$ such that $p(y) \prefix h(x)$. 
The first
$\log n+ \log (1/\epsilon)$ bits of a fingerprint comprise the
\defn{\baseline fingerprint}, which is subdivided as in a quotient
filter~\cite{BenderFaJo12,PandeyBeJo17}. In particular, the first $q=\log n$ bits
comprise the \defn{quotient}, and the next $r=\log (1/\epsilon)$ bits
the \defn{remainder}.  The remaining bits (if any) comprise the
\defn{adaptivity bits}.

\pparagraph{Using the parts of the fingerprint.}  The \baseline
fingerprint is long enough to guarantee that the false-positive
rate is at most $\epsilon$.  We add adaptivity bits to fix
false positives, in order to achieve a sustained
false-positive rate of $\epsilon$. Adaptivity bits are also added during
insertions. We maintain the following invariant: 

\begin{invariant}\label{inv:hash-invariant}
No fingerprint is a prefix of another.
\end{invariant}

By this invariant, a query for $x$ can match at most one $p(y)\in\calP$. 
As we will see, we can fix a false positive by adding
adaptivity bits to the single $p(y)$, for which $p(y) \prefix h(x)$.
Thus, adding adaptivity bits during insertions reduces the number of
adaptivity bits added during false positives, which will allow us to
achieve $O(1)$ work and remote accesses for each operation.

Shortly we will give a somewhat subtler reason why adaptivity bits are added during insertions---in order to defeat deletion-based timing attacks on the sustained false-positive rate.

\pparagraph{Maintaining the fingerprints.}
Here we describe what the broom filter does on a call to \extend.  
In this section we drop $(\loc,\rem)$ and $\rho$ from the notation for simplicity.

We define a subroutine of \extend which we call \mextend$(x,\calP)$.  This function is used to maintain \invref{hash-invariant} and to fix false positives.

Observe that on a query $x$ there exists at most one $y$ for which $p(y) \prefix h(x)$, by \invref{hash-invariant}.  
If such a $y$ exists, the $\mextend(x,\calP)$ operation modifies the local representation by appending adaptivity bits to $p(y)$ until $p(y) \not\prefix h(x)$.  (Otherwise, $\mextend(x,\calP)$ does nothing.)
Thus, \mextend{} performs remote accesses to \rev, where \arev{x} returns the (unique)
$y\in\calS$ such that $p(y)\prefix h(x)$.  \rev is a part of $\rem$, and can be
implemented using a dictionary.

We can define $\extend(x)$ as follows: 
  \begin{itemize}
\item {\bf Queries.} 
If a query $x$ is a false positive,  we call \mextend$(x,\calP)$, after
which $x$ is no longer a false positive.  
\item {\bf Insertions.}
When inserting an element $x$ into $\calS$, we first check if
\invref{hash-invariant} is violated, that is, if there exists a $y\in
\calS$ such that $p(y) \prefix h(x)$.\footnote{This step and the
  following assume $x$ does not already belong
  to $\calS$.  If it does, we don't need to do anything during insertions.}
If so, we call 
\mextend$(x,\calP)$, after
which $p(y) \not\prefix h(x)$. 
 Then we add the shortest prefix of
$h(x)$ needed to maintain  \invref{hash-invariant}.
\item {\bf Deletions.} 
  Deletions do not make calls to \extend. 
  We defer the details of the deletion operation until after we
  discuss how to reclaim bits introduced by \extend.  For now we note
  the na\"\i{}ve approach of deleting an element's fingerprint is
  insufficient to guarantee a sustained false-positive rate.
 \end{itemize}

\subsection{\bf\emph{Reclaiming Bits}} \seclabel{reclaiming-bits} 
Each call to \extend adds bits, and so we
need a mechanism to remove bits.
An amortized way to reclaim bits is to rebuild the broom filter
with a new hash function every $\Theta(n)$ calls to \extend.

This change from old to new hash function can be deamortized without losing a
factor of~$2$ on the space.  We keep two hash functions, $h_a$ and $h_b$; any
element $y$ greater than frontier $z$ is hashed according to $h_a$, otherwise,
it is hashed according to $h_b$.  At the beginning of a \defn{phase}, frontier
$z=-\infty$ and all elements are hashed according to $h_a$.  Each time we call
\extend, we delete the smallest constant $c>1$ elements in $\calS$ greater than
$z$ and reinsert them according to $h_b$.  (Finding these elements requires
access to $\rem$; again this can be efficiently implemented using standard data
structures.) We then set $z$ to be the value of the largest reinserted element.
When $z$ reaches the maximum element in $\calS$, we begin a new phase by
setting $h_a = h_b$, picking a new $h_b$, and resetting $z = -\infty$.  We use
this frontier method for deamortization so that we know which hash function to
use for queries: lookups on $x\leq z$ use $h_b$ and those on $x>z$ use $h_a$.

\begin{observation}\obslabel{rehash}
A hash function times out after  $O(n)$ calls to \extend.
\end{observation}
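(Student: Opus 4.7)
My plan is to prove the observation via a straightforward potential argument tracking the frontier's progress through $\calS$ during a phase. I would define the potential $\Phi$ to be the number of elements of $\calS$ lying strictly above the frontier $z$. At the beginning of each phase we have $z=-\infty$, so $\Phi = |\calS| \le n$ by the broom filter's capacity bound, and the phase ends exactly when $z \ge \max(\calS)$, equivalently when $\Phi = 0$.

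The next step is to analyze how each type of operation changes $\Phi$. Each call to \extend deletes and reinserts the $c>1$ smallest above-frontier elements under $h_b$ and then advances $z$ past the largest of them, so $\Phi$ drops by exactly $c$ per call (or to $0$ if $\Phi < c$, at which point the phase ends). Insertions of $y > z$ raise $\Phi$ by at most $1$, deletions of $y > z$ lower it by $1$, and all operations at or below $z$ leave $\Phi$ unchanged. Writing $I$, $D$ for the number of insertions and deletions above $z$ during the phase and $K$ for the total number of \extend calls, the telescoping identity gives
\[
0 \;=\; \Phi_{\text{end}} \;=\; \Phi_{\text{start}} \;+\; I \;-\; D \;-\; cK,
\]
so $cK = \Phi_{\text{start}} + (I-D) \le n + (I-D)$.

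The main obstacle is controlling the net inflow $I-D$. An adversary can trigger insertions above $z$ that do not themselves cause any prefix collision (and therefore do not invoke \extend), so one cannot simply pair each above-frontier insertion with an \extend that cleans it up. The key fact that rescues the argument is that the capacity constraint $|\calS| \le n$ forces $\Phi \le n$ at every instant, so the reservoir of above-frontier elements never exceeds $n$; together with the deamortization schedule (in which $c$ is chosen as a sufficiently large constant so that the frontier advances by more elements per \extend call than can accumulate above it) this caps $I-D = O(n)$ over a single phase. Plugging this into $cK \le n + (I-D)$ yields $K = O(n)$, matching the amortized rebuild bound of $\Theta(n)$ calls to \extend per hash function.
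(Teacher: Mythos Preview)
The paper states this as an observation with no proof; it is meant to follow immediately from the deamortization scheme just described. Note that at the point where the observation appears, deletions have not yet been introduced (they come in the following subsection), so $\calS$ only grows. In that setting your potential $\Phi$ works cleanly: the total number of elements ever placed above the frontier during a phase is $\Phi_{\text{start}}$ plus the above-frontier insertions, and since there are no deletions this sum is at most $|\calS|_{\text{end}}\le n$. Each call to \extend sweeps $c$ of them below $z$, so the phase ends after at most $n/c$ calls. Your first two paragraphs capture exactly this, and that is all the paper intends here; it explicitly remarks afterward that ``if we did not have deletions, then [this observation] and [the next lemma] would be enough.''

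Your third paragraph, however, tries to handle deletions preemptively, and that part does not go through. The inference from ``$\Phi\le n$ at every instant'' to ``$I-D=O(n)$'' is circular: from $\Phi_t=\Phi_0+I_t-D_t-cK_t$ and $\Phi_t\le n$ you only get $I_t-D_t\le n-\Phi_0+cK_t$, which becomes vacuous when substituted back into $cK=\Phi_0+I-D$. Nor does taking $c$ to be a ``sufficiently large constant'' help, because nothing ties the number of above-frontier insertions to the number of \extend calls. Concretely, an adversary can repeatedly delete the elements that have just been swept below $z$ and insert fresh large-valued elements above $z$; those insertions cause no prefix collision with probability $1-\epsilon$ and hence trigger no \extend, so $I-D$ grows without any corresponding growth in $K$, and no constant $c$ outruns this. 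The paper deals with deletions separately in the next subsection (retaining ghosts of deleted elements that also participate in the frontier sweep, and raising $c$ past $2$); the observation itself is not claimed to cover that case unaided.
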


Because every call to \extend introduces an expected constant number of adaptivity
bits, we obtain:

\begin{lemma}\lemlabel{linear-extend-bits}
  In any phase, \extend introduces $O(n)$ adaptivity bits into the
  broom filter with high probability.
\end{lemma}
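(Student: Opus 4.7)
\textbf{Proof plan for \lemref{linear-extend-bits}.} The plan is to show that a single call to $\extend$ appends a number of adaptivity bits that is stochastically dominated by a Geometric$(1/2)$ random variable, and then invoke a standard Chernoff bound for sums of geometrics, applied to the $O(n)$ calls in a phase (\obsref{rehash}).

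First, I would analyze a single call $\extend(x)$. By \invref{hash-invariant}, there is at most one element $y \in \calS$ with $p(y) \prefix h(x)$; if no such $y$ exists, the call appends no bits. Otherwise, $\mextend(x,\calP)$ appends successive bits of $h(y)$ onto $p(y)$ until $p(y) \not\prefix h(x)$. Writing $k = |p(y)|$ at the start of the call, the number of appended bits equals $1 + L$, where $L$ is the length of the longest common prefix of $h(x)$ and $h(y)$ strictly beyond bit~$k$ (capped above by $c\log n - k$).

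Second, I would argue that $L$ is stochastically dominated by a Geometric$(1/2)$ random variable, regardless of the adversary's strategy. Let $\mathcal{F}$ denote all information the adversary has accumulated by the time of this call: the current fingerprints $\calP$, the answers to all previous queries, and all previously revealed adaptivity bits. Under the hash-function model of \secref{hashing} (random oracle, or one-way function against a polynomially bounded adversary), conditional on $\mathcal{F}$, the bits of $h(y)$ beyond the revealed prefix $p(y)$ and the bits of $h(x)$ beyond $p(y)$ are independent uniform coin flips. Hence each successive bit match occurs with probability $1/2$ independently, and $L \preceq \mathrm{Geom}(1/2)$ conditional on $\mathcal{F}$.

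Third, I would apply concentration. By \obsref{rehash}, a phase consists of at most $cn$ calls to $\extend$ for some constant $c$; let $X_1, \dots, X_{cn}$ be the number of adaptivity bits they introduce. The conditional-domination above implies that the moment generating function of $\sum_i X_i$ is bounded above by that of $\sum_i Y_i$ for independent $Y_i \sim \mathrm{Geom}(1/2)$ with $\E[Y_i] = 2$. A standard Chernoff bound for sums of independent geometrics then gives
\[
\Pr\!\left[\sum_{i=1}^{cn} X_i \geq 4cn\right] \;\leq\; \Pr\!\left[\sum_{i=1}^{cn} Y_i \geq 4cn\right] \;\leq\; e^{-\Omega(n)},
\]
which is the desired high-probability bound.

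The main obstacle is the conditional-independence claim in the second step: the adversary is adaptive and, through feedback, learns the exact adaptivity bits appended on every past call. The key observation is that these revealed bits are always \emph{prefixes} of hash values, and the remaining suffixes of the hashes of the relevant elements stay uniformly random in the adversary's view by the hash-function assumption of \secref{hashing}. Once this is formalized, the MGF-domination step is routine and the concentration bound is immediate.
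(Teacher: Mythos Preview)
Your proposal is correct and follows essentially the same approach as the paper: both invoke \obsref{rehash} to bound the number of \extend calls per phase by $O(n)$, both use \invref{hash-invariant} to isolate a single colliding element, both observe that the number of appended bits per call is geometrically distributed with parameter $1/2$, and both conclude via Chernoff. Your treatment is somewhat more careful than the paper's in spelling out the conditional-independence argument against the adaptive adversary (the paper simply asserts the $2^{-i}$ probability without discussing what the adversary has learned), but this is a difference of rigor rather than of method.
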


\ifjournal
\begin{proof}
  By~\obsref{rehash}, for some constant $c_1$, there are $c_1 n$ false
  positives before the entire AMQ gets rehashed. Constant $c_1$ is
  determined by the number of elements that get rehashed per false
  positive, and so can be tuned.

  Each time there is a false positive, there is a collision with
  exactly one element by \invref{hash-invariant}. Given 
  that there is a collision, the probability that it can be
  resolved by extending fingerprints by $i$ bits is $2^{-i}$. Whenever
  an element is rehashed, its adaptivity bits get thrown out.  Thus,
  by Chernoff bounds, the number of adaptivity bits in the data
  structure at any time is $O(c_1 n)$ w.h.p.
\end{proof}
\else
The proof of \lemref{linear-extend-bits} follows from Chernoff bounds and is included in the full version~\cite{BenderFaGo18}.
\fi

If we did not have deletions, then \obsref{rehash} and
\lemref{linear-extend-bits} would be enough to prove a bound on
total size of all fingerprints---because adaptivity bits are removed
as their hash function times out.  To support
deletions we introduce adaptivity bits via a second mechanism.
We will show that this second mechanism also introduces a total of
$O(n)$ adaptivity bits per phase.

\subsection{\bf\emph{Deletions and Adaptivity Bits}}\seclabel{deletions-adaptivity-bits}

It is tempting to support deletions simply by removing fingerprints
from $\calP$, but this does not work.  To see why, observe that false
positives are eliminated by adding adaptivity bits.  Removing
fingerprints destroys history and reintroduces false positives. This
opens up the data structure to timing attacks by the adversary.

We describe one such timing attack to motivate our solution.  The
adversary finds a false positive $x$ and an element $y \in \calS$
that collides with $x$.  (It finds $y$ by deleting and reinserting
random elements until $x$ is once again a false positive.)  The 
attack then consists of repeatedly
looking up $x$, deleting $y$, then inserting $y$.  This results in a false positive on every lookup until $x$ or $y$'s hash function changes.

Thus, the broom filter needs to remember the history for deleted
elements, since they might be reinserted. Only once $y$'s hash
function has changed can $y$'s history be forgotten. A profligate
approach is to keep the fingerprints of deleted elements as ``ghosts''
until the hash function changes. Then, if the element is reinserted,
the adaptivity bits are already there.  Unfortunately, remembering
deleted elements can blow up the space by a constant factor, which we
cannot afford.

Instead, we remember the adaptivity bits and quotient from each
deleted element's fingerprint---but we forget the remainder.  Only once
the hash function has changed do we forget everything.
This can be accomplished by including deleted elements in the strategy described in \secref{reclaiming-bits}.
(with deletions, we increase the requirement on  adaptivity bits reclaimed at once to $c > 2$). 

Now when a new element $x$ gets inserted, we check whether there
exists a ghost that matches $h(x)$.  If so, then we give $x$ at least
the adaptivity bits of the ghost, even if this is more than needed to
satisfy \invref{hash-invariant}. This scheme guarantees the following: 

\begin{property}\proplabel{inherit-adaptivity-bits}
  If $x$ is a false positive because it collides with $y$, then it
  cannot collide with $y$ again until $x$ or $y$'s hash function times
  out (even if $y$ is deleted and reinserted).
\end{property}

\subsection{\bf\emph{Sustained False-Positive Rate}}
We now establish the sustained false-positive rate of broom filters.
We begin by introducing notation:

\begin{definition}
  Hashes $h(x)$ and $h(y)$ have a \defn{soft collision} when they have
  the same quotient.
  They have a \defn{hard collision} when 
  they have the same quotient and remainder. 
  Hash $h(x)$ and fingerprint $p(y)$ have a \defn{full collision} if
  $p(y) \prefix h(x)$.
\end{definition}

The hash function is fixed in this section, so we refer to $x$ and $y$ themselves as having (say) a soft collision, with the understanding that it is their hashes that collide.

\begin{lemma}\lemlabel{collision-probability}
  The probability that any query has a hard collision with any of $n$
  fingerprints is at most $\epsilon$.
\end{lemma}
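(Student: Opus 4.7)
The plan is a direct union bound over the $n$ fingerprints, leveraging the fact that a hard collision is determined entirely by the baseline portion of the hash values.

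First, I would unpack the definition: $h(x)$ and $h(y_i)$ have a hard collision iff they agree on their first $q + r = \log n + \log(1/\epsilon) = \log(n/\epsilon)$ bits (quotient together with remainder). Crucially, this event depends only on the baseline prefix of $h(y_i)$, and not on any adaptivity bits that \extend may have appended to $p(y_i)$. So even though fingerprints in $\calP$ may have grown over the course of the algorithm, the set of query hashes that hard-collide with $y_i$ is fixed once $h(y_i)$ is fixed.

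Next, I would fix a single element $y_i \in \calS$ and argue that, for the query $x$ produced by the (possibly adaptive) adversary, the first $\log(n/\epsilon)$ bits of $h(x)$ and $h(y_i)$ agree with probability at most $2^{-\log(n/\epsilon)} = \epsilon/n$. This is exactly what the hashing assumption of \secref{hashing} buys us: under ideal hashing (or one-way functions against polynomially bounded adversaries), the adversary cannot force any prescribed prefix of $h(x)$ to coincide with that of $h(y_i)$ with probability better than the uniform random case. A union bound over the $n$ elements $y_1, \ldots, y_n$ of $\calS$ then yields a hard-collision probability of at most $n \cdot \epsilon/n = \epsilon$.

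The one place to exercise care is the adversary's adaptivity: the query $x$ may depend on prior responses of the broom filter, which leak partial information about $h$. The hashing assumption is stated precisely to absorb this concern, so no additional work is needed there. Beyond that, since hard collisions are governed only by the untouched baseline bits, the extra information the adversary extracts by observing adaptivity-bit extensions after past false positives cannot sharpen its guess, and the $\epsilon/n$ per-element bound still applies. The rest of the argument is just the arithmetic above, so I do not expect any real obstacle.
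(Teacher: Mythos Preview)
Your proposal is correct and takes essentially the same approach as the paper: compute the per-fingerprint hard-collision probability as $2^{-(\log n + \log(1/\epsilon))} = \epsilon/n$ and union-bound over the $n$ fingerprints. The paper's proof is just these two sentences; your additional care about adaptivity and leakage is not wrong, but the paper defers all of that to the surrounding lemma (\lemref{sustained-fp-rate}), treating \lemref{collision-probability} as a purely combinatorial fact about two fresh hash prefixes.
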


\begin{proof}
  The probability that any query collides with a single fingerprint is
  $2^{-(\log n + \log{(1/\epsilon)})} = \epsilon/n$. Applying the
  union bound, we obtain the lemma.
\end{proof}

\begin{lemma}\lemlabel{sustained-fp-rate}
  The sustained false-positive rate of a broom filter is $\epsilon$.
\end{lemma}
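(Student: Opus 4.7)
The plan is to reduce the sustained bound to the static bound of Lemma~\lemref{collision-probability} by arguing that the adversary's adaptive interaction cannot amplify the false-positive probability past $\epsilon$. The first observation is that every fingerprint currently in $\calP$ has length at least $\log n+\log(1/\epsilon)$ (quotient plus remainder); consequently the event ``$p(y)\prefix h(x')$ for some $y\in\calS$'' coincides with $h(x')$ having a hard collision with some $h(y)$, and Lemma~\lemref{collision-probability} gives static FP probability at most $\epsilon$.

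I would then case-split on the history of the adversary's final output $x'$. (i) If $x'$ was never queried, the hash-function assumption of Section~\secref{hashing} together with Lemma~\lemref{collision-probability} directly bounds $\Pr[x'\text{ is a FP}]\leq\epsilon$. (ii) If the last lookup of $x'$ returned \present, then \amqupdate invoked \mextend, which, by \invref{hash-invariant}, appended adaptivity bits to the \emph{unique} $p(y)\prefix h(x')$ until the prefix relation broke; \propref{inherit-adaptivity-bits} then guarantees that $y$ cannot re-collide with $x'$ (even across delete/reinsert sequences) until one of their hash functions times out. (iii) If the last lookup returned \absent, then no fingerprint was a prefix of $h(x')$ at that moment, and the only way $x'$ becomes a FP later is via some subsequent insertion of $z$ with $p(z)\prefix h(x')$.

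The main obstacle is case (iii): one must bound the joint probability that some queried-absent $x_i$ and some later-inserted $z_j$ create a collision that the adversary can then output as $x'$. Here I would lean on the hash-function assumption to treat each newly drawn $h(z_j)$ as independent of the transcript of prior absent answers, yielding per-pair collision probability $\epsilon/n$; coupled with the observation that verifying the collision by re-querying $x_i$ would itself invoke \mextend and erase it, the adversary is effectively forced to guess. Meanwhile, \propref{inherit-adaptivity-bits} forecloses resurrection of previously-cleaned collisions via the deletion-reinsertion attack of Section~\secref{deletions-adaptivity-bits}. Aggregating the three sub-cases yields total sustained false-positive probability at most $\epsilon$.
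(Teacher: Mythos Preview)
Your ingredients are correct—you identify that a full collision implies a hard collision (so \lemref{collision-probability} bounds the static probability) and that \propref{inherit-adaptivity-bits} is the mechanism ruling out repeated collisions. But the case split on the \emph{last lookup result of $x'$} leaves a gap and is more tangled than necessary.

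The gap is in case~(ii). You show that the unique $y$ that triggered the last \present answer is permanently separated from $x'$ by \propref{inherit-adaptivity-bits}. But this says nothing about elements $z$ inserted \emph{after} that last lookup; any such $z$ could satisfy $p(z)\prefix h(x')$ at the moment the adversary outputs $x'$, and your case~(ii) never bounds that event. You confine the discussion of post-lookup insertions to case~(iii), yet exactly the same issue arises in case~(ii). And in case~(iii) itself, the ``forced to guess'' remark is an intuition about what the adversary can learn, not a probability bound on the final event—what you need there is simply the per-pair $\epsilon/n$ bound together with a union bound over current set members, which you state but do not carry through.

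The paper sidesteps both difficulties by partitioning over the elements currently in $\calS$ rather than over the history of $x'$. At the moment $x'$ is tested, call $x'$ and $y\in\calS$ \emph{related} if $x'$ was queried at some earlier time $t'$ while $y$ was in $\calS$ and neither hash has been replaced since. Related pairs have collision probability~$0$ by \propref{inherit-adaptivity-bits}. Every unrelated pair—which uniformly covers your case~(i), the post-\mextend{} insertions missing from your case~(ii), and the post-\absent insertions of your case~(iii)—can collide only via a fresh hard collision, and \lemref{collision-probability} then gives the $\epsilon$ bound directly, with no separate bookkeeping for the various interleavings of queries, inserts, and deletes.
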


\begin{proof}
  We prove that on any query $x \notin S$, 
  $\Pr[\exists y\in\calS \mathrel{}$ $|$ $\mathrel{} x$ has a full collision with
  $y]\leq \epsilon$, regardless of the previous history.  Any previous
  query that is a negative or a true positive has no effect on the
  data structure.  Furthermore, deletions do not increase the chance
  of any full collision, so we need only consider false positives and
  insertions, both of which induce rehashing.

  We say that $x\in\calU$ and $y\in\calS$ are \defn{related at time $t$}
  if (1) there exists $t'<t$ such that $x$ was queried at time $t'$ and
  $y$ was in $\calS$ at $t'$, and (2) between $t'$ and
  $t$, the hash functions for $x$ and $y$ did not change.
  Suppose $x$ is queried at time $t$.  Then, by
  \propref{inherit-adaptivity-bits}, if $x$ and $y$ are related at
  time $t$, then $\Pr[ x \text{ is a false positive at } t] =
  0$.  If $x$ and $y$ are not related at time $t$, then 
  $\Pr[ x \text{ has a full 
    collision with } y ] \leq 
\Pr[ h(x) \text{ has a hard
    collision with } h(y) ]$.  Finally, by 
 \lemref{collision-probability}, 
  $\Pr[ x \text{ is a false positive at } t] \leq \epsilon$. 
\end{proof}
\subsection{\bf\emph{Space Bounds for Adaptivity Bits}}

We first prove that at any time there are~$O(n)$ adaptivity bits.
Then we bootstrap this claim to show a stronger property: there are
$\Theta(\log n)$ fingerprints associated with
$\Theta(\log n)$ contiguous quotients, and these fingerprints have a
total of~$O(\log n)$ adaptivity bits w.h.p.\ (thus they
can be stored in~$O(1)$ machine words).

For the purposes of our proofs, we partition adaptivity bits into two
classes: \defn{extend bits}, which are added by calls to \mextend, and
\defn{copy bits}, which are added on insertion due to partial matches
with formerly deleted items.  As some bits
may be both extend and copy bits, we partition adaptivity bits by
defining all the adaptivity bits in a fingerprint to be of the same
type as the last bit, breaking ties in favor of extend.
If an item is deleted and then reinserted, its bits are of the same
type as when it first got them.  (So if an item that gets extend bits
is deleted and reinserted with the same adaptivity bits, then it still
has extend bits.)

\begin{lemma}\lemlabel{linear-adaptivity-bits}
  At any time, there are $O(n)$ adaptivity bits in the broom filter
  with high probability.
\end{lemma}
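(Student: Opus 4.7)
The plan is to separately bound the number of extend bits and copy bits, since these are the two sources of adaptivity bits (by definition, every adaptivity bit of a fingerprint inherits its type from the last adaptivity bit of that fingerprint, with ties broken in favor of extend). Summing the two bounds will give the claimed $O(n)$ overall.

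For extend bits, I would invoke \lemref{linear-extend-bits} directly. That lemma says that within one phase \extend introduces $O(n)$ extend bits w.h.p. Because of the frontier deamortization in \secref{reclaiming-bits}, at any moment the broom filter is simultaneously responsible for at most two active hash functions $h_a$ and $h_b$, i.e., at most the overlap of two consecutive phases. Any extend bit that was produced under a hash function that has already timed out (by \obsref{rehash}) has been discarded when the owning element was re-hashed across the frontier. Hence the live extend bits come from at most two phases, for a total of $O(n)$ w.h.p.

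For copy bits, the plan is a charging argument. A copy bit is never produced from scratch; it is inherited at insertion time from the ghost of some formerly-deleted element $y$, whose own adaptivity bits were in turn either extend bits created by a past call to \mextend or copy bits inherited from an even earlier ghost. Tracing each live copy bit back through the chain of delete-reinsert events, the chain must terminate at an originating \mextend invocation, because a ghost exists only if $y$ previously possessed adaptivity bits. I would argue that each such originating \mextend invocation is charged by at most one live fingerprint at any given instant: the ghost of $y$ is consumed when its bits are transferred to the re-inserting element $x$ (and if $x$ is later deleted its bits simply re-become the ghost for the next reinsertion). Because reinserts must also respect the rehashing schedule---and ghosts are swept out at the same rate (this is exactly the reason the constant $c$ was strengthened to $c>2$ in \secref{deletions-adaptivity-bits} for the deletion case)---only the \mextend events of at most the current two overlapping phases can be charged. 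By \lemref{linear-extend-bits} that is $O(n)$ w.h.p.

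Adding the two bounds yields $O(n)$ adaptivity bits w.h.p., as claimed. The main obstacle I expect is making the charging for copy bits airtight across long delete--reinsert--delete--reinsert chains: one must check that the "ghost is consumed upon match" accounting is preserved across the frontier rehashing (so that a copy bit inherited from a ghost under $h_a$ is discarded when its owner is migrated to $h_b$), and that ghosts themselves are migrated on the same $O(n)$ schedule so the chain cannot escape the current pair of active hash functions. Once that bookkeeping is pinned down, the extend-bit bound from \lemref{linear-extend-bits} transfers to copy bits without losing any additional high-probability factors.
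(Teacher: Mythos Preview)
Your decomposition into extend bits and copy bits is correct, and your handling of extend bits via \lemref{linear-extend-bits} matches the paper. The divergence is in how you bound copy bits: the paper does \emph{not} use a charging argument but a direct probabilistic count. It observes that there are at most $O(n/k)$ quotients carrying $\ge k$ extend bits (since the total is $O(n)$), so a freshly inserted $x$ matches a $k$-bit ghost pattern with probability $O(1/(k2^k))$; summing $k$ and applying Chernoff gives $O(n)$ copy bits w.h.p.

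Your charging argument has a genuine gap. The step ``the ghost of $y$ is consumed when its bits are transferred to the re-inserting element $x$'' is not part of the scheme, and in fact \emph{cannot} be: if an unrelated element $z$ (same quotient as $y$, matching adaptivity-bit pattern, but different remainder) were allowed to consume the ghost, then a subsequent reinsertion of $y$ itself would find no ghost, receive no adaptivity bits, and the original false positive $x$ would again satisfy $p(y)\sqsubseteq h(x)$, violating \propref{inherit-adaptivity-bits}. So ghosts must persist. Once they persist, two live elements $z_1,z_2$ sharing $y$'s quotient and adaptivity-bit pattern but with distinct remainders can both inherit copy bits from the same ghost (Invariant~\ref{inv:hash-invariant} is satisfied because their remainders differ), and your ``at most one live fingerprint per originating \mextend'' invariant fails. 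The obstacle is therefore not the frontier bookkeeping you flag, but the injectivity of the charging itself; this is exactly why the paper falls back on a probabilistic bound that tolerates many inheritors per ghost, each occurring with small probability.
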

\ifjournal
\begin{proof}
  \lemref{linear-extend-bits} bounds the number of extend bits. We
  still need to bound the number of copy bits.  
  We do so using
   a straightforward application of Chernoff bounds.

  The number of quotients that have at least
  $k$ extend bits is $O(n/k)$. This is because the total number of
  extend bits is $O(n)$.  Therefore,
  the probability that $h(x)$ accumulates $k$ extend bits is
  $O(1/(k2^k)$.  (This is the probability that $h(x)$ matches a
  quotient with $k$ extend bits times the probability that those
  extend bits match.)

  Thus, the expected number of copy bits from length-$k$ strings
  is $O(n/2^k)$, for $1\leq k\leq \Theta(\log n)$. 
  By Chernoff,
    these bounds also hold w.h.p. for $k \leq (\log n)/\log\log n$; for $k > (\log n)/\log\log n$ Chernoff bounds give that there are $O(\log n)$ bits from length-$k$ strings w.h.p.  
    Thus, the total number of adaptivity
  bits is $O(n)$, w.h.p.
\end{proof}
\fi

\begin{lemma}\lemlabel{logn-group-extend-bits} 
  There are~$\Theta(\log n)$ fingerprints associated with a range of~$\Theta(\log n)$ contiguous quotients, and these fingerprints have~$O(\log n)$ 
total extend bits w.h.p.
\end{lemma}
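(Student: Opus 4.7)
The plan is to establish the lemma in two parts using balls-into-bins Chernoff arguments: one for the fingerprint count, and one for the extend-bit count.

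For the first part, observe that the quotient of a fingerprint is the first $q=\log n$ bits of a (near-)uniform hash, so the $n$ elements distribute over the $n=2^{q}$ quotient buckets as $n$ balls into $n$ bins. A standard Chernoff bound gives that any fixed range of $c\log n$ contiguous quotients contains $\Theta(\log n)$ fingerprints with probability $1-1/\poly(n)$, and a union bound over the $O(n)$ candidate starting positions preserves this w.h.p.\ guarantee.

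For the second part, I would combine \obsref{rehash}, \lemref{linear-extend-bits}, and a second Chernoff argument. By \obsref{rehash} there are $O(n)$ calls to \mextend per phase. Each such call is triggered by a false-positive query $x$ and lengthens the single fingerprint $p(y)$ whose quotient equals that of $h(x)$ (by \invref{hash-invariant}), appending a number of bits stochastically dominated by a $\mathrm{Geom}(1/2)$ variable, since each additional bit resolves the collision with independent probability $1/2$. For a fixed range $R$ of $\Theta(\log n)$ contiguous quotients, the quotient of $h(x)$ is uniform over the $n$ buckets, so each \mextend call targets $R$ with probability $O((\log n)/n)$. The expected number of \mextend calls hitting $R$ is therefore $O(\log n)$, and a Chernoff bound pushes this to $O(\log n)$ w.h.p.; summing $O(\log n)$ geometric variables (which have exponential tails) yields $O(\log n)$ total extend bits in $R$ w.h.p.\ by another Chernoff/Bernstein-type inequality. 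A final union bound over the $O(n)$ starting positions for $R$ completes the proof.

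The main obstacle will be justifying that the quotient targeted by each \mextend call is effectively uniform despite the adversary's adaptivity. I would address this via a principle-of-deferred-decisions argument: the adversary sees only $\loc$, $\rem$, and the feedback from past queries, not the full hashes $h(x)$ of her future queries, and \propref{inherit-adaptivity-bits} prevents any $(x,y)$ pair from triggering a second extend event within the same phase. Conditioned on all information available to the adversary at the moment she issues a new query $x$, the quotient bits of $h(x)$ remain uniform, so the Chernoff analysis applies exactly as in the oblivious setting and the bound on extend bits in any window of $\Theta(\log n)$ quotients goes through.
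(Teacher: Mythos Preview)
Your first part (the $\Theta(\log n)$ fingerprint count via balls-in-bins Chernoff) is fine and matches the paper.  The second part has a real gap.

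You write that there are $O(n)$ \mextend\ calls per phase, and that ``each \mextend\ call targets $R$ with probability $O((\log n)/n)$'' because ``the quotient of $h(x)$ is uniform.''  These two statements do not connect.  The quotient of $h(x)$ is uniform for a \emph{fresh} query $x$; but an \mextend\ call is made only when $x$ is a false positive, and conditioning on that event biases the quotient toward buckets that hold stored fingerprints.  Your deferred-decisions paragraph argues that the adversary cannot see $h(x)$ for a new $x$, which is true, but it does not address this conditioning.  Even granting the right marginal, you still need some independence (or domination) to invoke Chernoff across the $O(n)$ \mextend\ events, and those events are coupled through the evolving fingerprint state.  As written, nothing prevents the adversary from making an unbounded number of queries in a phase and having the resulting \mextend\ calls pile up in one window.

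The paper closes this gap differently: rather than counting \mextend\ calls, it bounds the number of \emph{distinct} elements queried or inserted within a phase by $O(n/\epsilon)$ w.h.p.  The key ingredient you are missing is a \emph{lower} bound on the false-positive probability: as long as there are only $O(n)$ adaptivity bits total, a fresh query triggers \mextend\ with probability $\Theta(\epsilon)$ (not merely $\le\epsilon$), so after $O(n/\epsilon)$ distinct elements the phase must end.  Once the distinct elements are bounded, their hashes are genuinely independent and uniform, and the paper can overcount by imagining all $O(n/\epsilon)$ of them inserted at once with \mextend\ resolving collisions.  Each such element lands in the group and forces an extend with probability $O(\epsilon\log n/n)$, giving $O(\log n)$ extend events in the group by Chernoff; the bit count then follows from a coin-flip/convexity argument similar in spirit to your sum-of-geometrics step.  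Your outline would become correct if you replaced ``$O(n)$ \mextend\ calls, each uniform'' with ``$O(n/\epsilon)$ distinct elements, each with an independent uniform hash,'' but establishing that bound requires the $\Omega(\epsilon)$ false-positive lower bound, which your proposal does not invoke.
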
 

%
%

\ifjournal
\begin{proof}
  As long as there are~$O(n)$ adaptivity bits and~$\Theta(n)$ stored elements,
  then no matter how the adaptivity bits are distributed:
  the first time that $x$ is queried or inserted with hash function
  $h$, \extend is called with probability
  $\Theta(\epsilon)$. By Chernoff bounds, before the phase~(see
  \secref{reclaiming-bits}) ends, there are $O(n/\epsilon)$ distinct
  elements not in $\calS$ that are ever queried and $O(n/\epsilon)$
  distinct elements that are ever inserted into $\calS$.

  We can now calculate an upper bound on the number of adaptivity bits at any
  time $t$. Recall that at the very beginning of phase $\ell$, there is a
  unique hash function $h_\ell$ that is in use, because $h_{\ell-1}$ has
  expired, and $h_{\ell+1}$ has not been used yet. Any extend adaptivity bits
  that are in the broom filter at time $t$ in phase $\ell$ were generated as a result
  of collisions generated by $h_{\ell-1}$, $h_{\ell}$, or $h_{\ell+1}$.

  Now consider all elements that were ever inserted or queried any
  time during phase $\ell-1$, $\ell$, or $\ell+1$ with $h_{\ell-1}$,
  $h_\ell$, or $h_{\ell+1}$.  If we took all these elements, and
  inserted them one at a time into $\calS$, calling \extend to resolve
  any collisions, this scheme would at least generate all the extend adaptivity
  bits that are present at time $t$.

  It thus suffices to show that even with this overestimate, the
  fingerprints associated with a range of $\Theta(\log n)$
  contiguous quotients have a total of $O(\log n)$ extend bits w.h.p.
  Call the $\Theta(\log n)$ quotients under consideration the \defn{group}.
  Define $0/1$-random variable $X_i= 1$ iff element $x_i$ lands in
  the group and induces a call to extend.  Thus,
  $\Pr[X_i= 1]\leq O(\epsilon\log n/n)$.  There are $O(n/\epsilon)$
  elements inserted, deleted, and/or queried in these rounds.  Thus,
  by Chernoff bounds, the number of elements that land in this
  quotient group is $O(\log n /\epsilon)$, and at most $O(\log n)$ of
  them get adaptivity bits w.h.p.

  We bound the number of bits needed to resolve the
  collisions.  There are~$O(\log n)$ elements that land in this group.  We 
  model this as a balls and bins game, where elements land in the same
  bin if they share the same quotient and remainder.  Let random
  variable $K_i$ represent the number of elements in the~$i$th
  nonempty bin.  The expected number of bits that get added until
  all collisions are resolved is~$2\sum_{i=1}^{O(\log n)} \log(K_i)$. By the
  convexity of the $\log$ function,  
  $\sum_{i=1}^{O(\log n)} \log(K_i) = O(\log n) $, regardless of the
  distribution of the elements into bins.

  To achieve concentration bounds on this result, we upper bound this process
  by a different process.  Each time we add a bit, there is a probability of at
  least $1/2$ that it matches with at most half of the remaining strings. Thus,
  the number of adaptivity bits is stochastically dominated by the number of
  coin flips we need until we get $\Theta(\log n)$ heads, which is $\Theta(\log
  n)$ w.h.p. 
\end{proof}
\fi

\begin{lemma}\lemlabel{logn-group-copy-bits} 
  There are~$\Theta(\log n)$ fingerprints associated with a range of~$\Theta(\log n)$ contiguous quotients, and these fingerprints have
$O(\log n)$ total adaptivity bits w.h.p.
\end{lemma}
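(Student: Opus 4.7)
The plan is to build on Lemma~\lemref{logn-group-extend-bits}, which already bounds the extend bits in any window of $\Theta(\log n)$ contiguous quotients by $O(\log n)$ w.h.p., and then separately bound the copy bits in the window by $O(\log n)$ w.h.p. The key observation is that a copy bit is inherited only from a ghost (a previously deleted fingerprint), and a ghost's adaptivity bits are exactly the extend bits it held when deleted; so the extend-bit bound in the window controls how many long matching targets new insertions can hit, and I can adapt the global copy-bit analysis from Lemma~\lemref{linear-adaptivity-bits}, rescaled from $n$ to $\log n$.

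First I would replay the window-counting argument from the proof of Lemma~\lemref{logn-group-extend-bits} to show that the total number of elements inserted, deleted, or queried whose quotient lands in the window during the three relevant phases is $O(\log n / \epsilon)$ w.h.p. I then condition on this event together with the high-probability conclusion of Lemma~\lemref{logn-group-extend-bits}: letting $M_k^w$ denote the number of window fingerprints (live or ghost) carrying exactly $k$ extend bits, we have $\sum_k k M_k^w = O(\log n)$, which in particular yields $M_{\geq k}^w = O((\log n)/k)$ for every $k\geq 1$.

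Next I would compute the expected number of copy bits contributed to the window per insertion. An insertion $x$ inherits at least $k$ copy bits only if $h(x)$ agrees with the quotient, remainder, and first $k$ adaptivity bits of some window ghost with $\geq k$ adaptivity bits, which happens with probability at most $M_{\geq k}^w / 2^{q+r+k} = O(\epsilon \log n / (n k 2^k))$. Summing over $k$ yields $O(\epsilon \log n / n)$ expected copy bits per insertion, and summing over the $O(n/\epsilon)$ insertions in a phase gives $O(\log n)$ expected copy bits in the window---the direct window-analogue of the calculation in Lemma~\lemref{linear-adaptivity-bits}.

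The last step is concentration, which I handle via the same stochastic-domination trick used in Lemma~\lemref{logn-group-extend-bits}: each additional inherited copy bit requires an independent bit of $h(x)$ to match the ghost, and so succeeds with probability at most $1/2$. Hence the number of copy bits acquired in the window is stochastically dominated by the number of coin flips needed to obtain $\Theta(\log n)$ matches, which is $\Theta(\log n)$ w.h.p. Combined with the $O(\log n)$ extend-bit bound from Lemma~\lemref{logn-group-extend-bits}, this gives the lemma. The main obstacle is the correlation between the random extend-bit configuration (which determines the targets $M_k^w$) and the hash bits used later for copy-bit matches; I sidestep this by first conditioning on the high-probability extend-bit event, so that the hash bits of each subsequent insertion are independent of the now-fixed ghost configuration and the Chernoff/domination argument applies cleanly.
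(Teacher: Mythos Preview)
Your expectation calculation contains a genuine error that breaks the argument. Ghosts do not store the remainder---the paper explicitly says ``we remember the adaptivity bits and quotient from each deleted element's fingerprint---but we forget the remainder''---so an insertion $x$ inherits copy bits from a ghost whenever the \emph{quotient} and the adaptivity bits match; the remainder plays no role. Your probability $M_{\geq k}^w / 2^{q+r+k}$ should therefore be $M_{\geq k}^w / 2^{q+k}$, and the factor $\epsilon = 2^{-r}$ you rely on disappears. With this correction, summing over $k$ gives $O((\log n)/n)$ expected copy bits per insertion, and summing over the $O(n/\epsilon)$ insertions in a phase gives $O((\log n)/\epsilon)$ expected copy bits in the window---too large whenever $\epsilon$ is subconstant. (Your conditioning step also does not quite save you: the ghost configuration at the moment $x$ is inserted depends on which earlier insertions matched which ghosts, so the hash bits of different insertions are not cleanly independent of the ``now-fixed'' targets.)

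The paper avoids this by analyzing only the $\Theta(\log n)$ elements \emph{currently stored} in the window at a fixed time $t$, not all insertions during the phase. It sets up one bin per quotient, lets $K_i$ be the number of ghost adaptivity-bit strings in the bin where the $i$th current element lands, and first proves $\sum_i \log K_i = O(\log n)$ w.h.p.\ (via Markov---expected strings per bin is $O(1)$ since the total extend bits in the window is $O(\log n)$---together with a coin-flip domination on $\log K_i$). It then bounds the copy bits inherited by the $i$th element by roughly $2\log K_i$, using the halving argument: each revealed bit of $h(x_i)$ cuts the number of still-matching ghost strings at least in half with probability $\geq 1/2$. Your final paragraph gestures at this halving trick, but without first establishing a bound like $\sum_i \log K_i = O(\log n)$ it is unclear what the ``$\Theta(\log n)$ matches'' you are waiting for actually are, and the domination statement as written does not hold once a bin contains several ghost strings.
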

\ifjournal
\begin{proof}
  We established the bound on extend bits in
  \lemref{logn-group-extend-bits}; now we focus on copy bits.
  
  Consider any time $t$ when there are $n$ elements in the broom
  filter, and consider any group of $\Theta(\log n)$ contiguous
  quotients.  By Chernoff bounds, $\Theta(\log n)$ of these $n$
  elements have hashes that have a soft collision with  one of these quotients w.h.p.  By
  \lemref{linear-adaptivity-bits}, there are a total of $O(\log n)$
  extend bits in this range.  We now show that there are also a total
  of $O(\log n)$ copy bits.

  The scheme from \secref{deletions-adaptivity-bits} can be described
  in terms of balls and bins as follows.  There are $\Theta(\log n)$
    bins, one for each quotient. Each \defn{string} of adaptivity bits belongs
  in a bin. Some bins can have multiple strings (but by standard
  balls-and-bins arguments, the fullest bin has
  $O(\log n/\log\log n)$ strings of adaptivity bits).
  When a new element $x$ is inserted, it lands in the bin determined
  by $h(x)$. Then $p(x)$ inherits the adaptivity bits in the bin iff
  $h(x)$ matches those adaptivity bits. (This means that any given
  string of adaptivity bits started out as extend bits, even if it got
    copied many times as copy bits.) 

  We now bound the number of adaptivity bits by considering a variation that
  adds more bits than the scheme from \secref{deletions-adaptivity-bits}.  For
  each element inserted into a bin, we keep appending copy bits as long as
  there is a match with some string of adaptivity bits in the bin.  Once there
  is a mismatch with every string, we stop.  Thus, while the scheme from
  \secref{deletions-adaptivity-bits} adds copy bits only on \emph{complete}
  matches, we allow \emph{prefix} matches while still retaining good bounds.

  We again overestimate the bounds by assuming that the
  adaptivity bits are adversarially (rather than randomly) divided into bit
  strings and that the bit strings are adversarially distributed among the
  bins.

  Let random variable $K_i$ denote the number of adaptivity bit strings in the
  bin where the $i$th element lands. 
  The first claim that we want to make is
  the following:

  \emph{Claim. } $\Pr[K_i\geq X] < O(1/X)$.

  \emph{Proof. } This follows from Markov's inequality and
  \lemref{logn-group-extend-bits}.  Since w.h.p., the total number of
  adaptivity bits is at most $O(\log n)$, the expected number of bits
  in a bin, and thus the expected number of strings, is $O(1)$.

  We next show the following claim, one of the cornerstones of the
  proof.
    
  \emph{Claim. } $\sum_{i=1}^{\Theta(\log n)}\log(K_i)=O(\log n)$.
    
  \emph{Proof. }
  By the previous claim, 
  \begin{equation*}
    \begin{split}
      \Pr[K_i\geq X]\leq \Pr[\text{we flip a coin and get at least} \\
      \text{$\log (X) - O(1) $ tails before any head}].
    \end{split}
  \end{equation*}
  Therefore, the
  probability that $\sum_{i=1}^{c\log n}\log(K_i)= d\log n$ is at most
  the probability that we flip a coin $d\log n$ times and get at most
  $c\log n$ heads.  For a suitable choice of constants $c$ and $d$,
  this is polynomially small.

  Next we bound the total number of adaptivity bits that the elements
  inherit. Element $x_i$ lands in a bin with $K_i$ adaptivity bit
  strings. Each time a bit is added, with probability at least $1/2$,
  the number of adaptivity strings that still match with $h(x_i)$
  decreases by half. Specifically, suppose that $k$ adaptivity strings
  still match $x_i$.  With probability at least $1/2$, after the next
  bit reveal, at most $\lfloor k/2\rfloor$ still match.  So after an
  expected $\leq 2\log(K_i)$ bits, no adaptivity bit strings still
  match $x_i$.  Once again this game is modeled as flipping a coin
  until until we get $\Theta(\log n)$ heads, and by Chernoff, only
  $\Theta(\log n)$ are needed w.h.p.\
\end{proof}
\fi

\section{{\bf B{\small ROOM} F{\small ILTERS}: I{\small MPLEMENTING} F{\small INGERPRINTS}}}\label{sec:broom-cases}

In \secref{broom-fingerprints}, we showed how to use
  fingerprints to achieve a sustained false-positive rate of
  $\epsilon$. 
In this section we give space- and time-efficient implementations for
the fingerprint operations that are specified in
\secref{broom-fingerprints}.  We explain how we store and
manipulate adaptivity bits (\secref{encoding-adaptivity-bits}),
quotients (\secref{encoding-quotients}), and remainders.  We
describe two variants of our data structure, because there are two
ways to manage remainders, depending on whether
$\log(1/\epsilon)\leq 2\log\log n$, the \defn{small-remainder case}
(\secref{small-remainder}), or
$\log(1/\epsilon)> 2\log\log n$, the \defn{large-remainder case}
(\secref{large-remainder}).

\pparagraph{Bit Manipulation within Machine
  Words.} 
In~\secref{wordstuff}, we show how to implement a
variety of primitives on machine words in $O(1)$ time using word-level
parallelism.
The upshot is that from now
on, we may assume that the asymptotic complexity for any operation on
the broom filter is simply the number of machine words that are
touched during the operation.

\ifjournal
In \secref{wordstuff}, we show how to implement a
variety of primitives on machine words in $O(1)$ time using word-level
parallelism; see \lemref{parallelsearch}.  
The upshot is that from now
on, we may assume that the asymptotic complexity for any operation on
the broom filter is simply the number of machine words that are
touched during the operation.
\fi

\subsection{\bf\emph{Encoding Adaptivity Bits and Deletion Bits}}
\seclabel{encoding-adaptivity-bits}\subseclabel{encoding-adaptivity-bits}

We store adaptivity bits separately from the rest of the fingerprint.
By \lemref{logn-group-copy-bits}, all of the adaptivity bits in any
range of $\Theta(\log n)$ quotients fit in a constant number of words.
Thus, all of the searches and updates to 
\ifjournal
(both copy and extend)
\fi
adaptivity bits take $O(1)$ time.

\subsection{\bf\emph{Encoding Quotients}}
\seclabel{encoding-quotients}\subseclabel{encoding-quotients} 

Quotients and remainders are stored succinctly in a scheme similar to quotient
filters~\cite{BenderFaJo12,PandeyBeJo17}; we call this high-level scheme \defn{quotienting}.

Quotienting stores the \baseline  fingerprints succinctly in an
array of $\Theta(n)$ slots, each consisting of $r$ bits.  
Given a fingerprint with quotient $a$
and remainder $b$, we would like to store $b$ in position $a$ of the array. 
This allows us to reconstruct the fingerprint based on $b$'s location. 
So long as the number of slots is not much more than the number of stored quotients,
this is an efficient representation. (In particular, we will have a sublinear number of extra slots in our data structure.) 

The challenge is  that  multiple fingerprints may have the
same quotient and thus contend for the same
location. Linear probing is a standard technique for resolving
collisions: slide an element forward in the
array until it finds an empty slot.  Linear probing does not
immediately work, however, since the quotient is supposed to be
reconstructed based on the location of a remainder.  The
quotient filter implements linear probing by maintaining a
small number (between 2 and 3) of metadata bits per array slot 
which encode the target slot for a remainder even when it
 is shifted to a different slot.

 The standard quotient filter does not achieve constant time
 operations, independent of $\epsilon$.  This is because when the
 remainder length $r=\log (1/\epsilon)=\omega(1)$, and the fingerprint
 is stored in a set of $\Omega(\log n)$ contiguous slots, there can be
 $\omega(1)$ locations (words) where the target fingerprint could be.
 (This limitation holds even when the quotient filter is half empty,
 in which case it is not even space efficient enough for
 \thmref{broom-performance}.)

 Nonetheless, the quotient filter is a good starting point for the
 broom filter because it allows us to maintain a multiset of \baseline
 fingerprints subject to insertions, deletions, and queries.  In
 particular, some queries will have a hard collision with multiple
 elements.%
 \footnote{This is the main challenge in achieving
   optimality with the single-hash function bloom filters of Pagh et
   al.~\cite{PaghPaRa05} or the backyard hashing construction of
   Arbitman et al.~\cite{ArbitmanNaSe10}. Instead we used techniques
   that permit the same element to be explicitly duplicated multiple
   times.  } We need to compare the adaptivity bits of the query to
 the adaptivity bits of each colliding element.  The quotienting
 approach guarantees that these adaptivity bits are contiguous,
 allowing us to perform multiple comparisons simultaneously using
 word-level parallelism.  In particular,
 \lemref{logn-group-extend-bits} ensures that the adaptivity bits for
 $O(\log n)$ quotients fit into $O(1)$ machine words.

\subsection{\bf\emph{Broom Filter Design for the Small-Remainder Case}}
\seclabel{small-remainder}

In this section we present a data structure for the case that $r = O(\log\log n)$.

\pparagraph{High Level Setup.} 
Our data structure consists of a primary and a secondary level.   
Each level is essentially a quotient filter; however, we slightly change the insert and delete operations for the primary level in order to ensure constant-time accesses.  

As in a quotient filter, the primary level consists of $n(1 + \alpha)$ slots, where each slot has a remainder of size $r = \log (1/\epsilon) = O(\log\log n)$.  
Parameter~$\alpha$ denotes the subconstant extra space we leave in our data structure; thus the primary level is a quotient filter as described in \secref{encoding-quotients}, with space parameterized by $\alpha$ (and with slightly modified inserts, queries, and deletes). We require $\alpha \geq \sqrt{(9r\log\log n)/\log n}$.  

The secondary level consists of a quotient filter with $\Theta(n/\log n)$
slots with a different hash function $h_2$.  Thus, an
element $x$ has two fingerprints $p_1(x)$ and $p_2(x)$.  The internals of the
two levels are maintained entirely independently:
Invariant~\ref{inv:hash-invariant} is maintained separately for each level, and
adaptivity bits do not carry over from the primary level to the secondary
level.

\pparagraph{How to Perform Inserts, Queries and Deletes.}
To insert $y\in\calS$, we first try to store the fingerprint $p_1(y)$ in the primary level.  This uses the technique described in \secref{encoding-quotients}: we want to store the remainder in the slot determined by the quotient.  If the slot is empty, we store the remainder of $p_1(y)$ in that slot.  Otherwise, we begin using linear probing to look for an empty slot, updating the metadata bits accordingly; see~\cite{BenderFaJo12,PandeyBeJo17}.

However, unlike in previous quotienting-based data structures, we stop our probing for an empty slot early: 
the data structure only continues the linear probing over $O((\log n)/r)$ slots (and thus $O(1)$ words).  If all of these slots are full, the item gets stored in the secondary level.
In~\lemref{secondlevel} we show that it finds an empty slot in~$O(1)$ words in the secondary level w.h.p.

We always attempt to insert into the primary level first.  In particular, even if $x$ is deleted from the secondary level while reclaiming bits~(\secref{reclaiming-bits}), we still attempt to insert $x$ into the primary level first.

Queries are similar to inserts---to query for $y$, we calculate $p_1(y)$ and search for it in the primary level for at most $O((\log n)/r)$ slots; if this fails we calculate $p_2(y)$ and search for it in the secondary level. 

\begin{lemma}
\lemlabel{primarytosecondary}
With high probability, $O(n/\log^2 n)$ elements are inserted into the secondary level.
\end{lemma}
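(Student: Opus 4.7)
The plan is to bound via a Chernoff argument the probability that any single inserted item overflows the primary level, and then obtain the claimed bound on the total count by linearity of expectation plus a concentration step.

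First, I would fix the primary linear-probing window to $L = c(\log n)/r$ slots for a constant $c$ to be chosen, and observe that the event ``$x$ is pushed to the secondary level'' implies, under a coupling to \emph{unbounded} linear probing (which dominates the bounded version, since leaving overflow items in the table only lengthens subsequent probe sequences), that $x$'s probe traverses a fully occupied cluster of at least $L$ primary slots. Such a cluster of length $\geq L$ exists around $x$'s quotient only if some window of $L$ consecutive slots contains at least $L$ of the $n$ items. Thus it suffices to bound the probability of this ``full-window'' event.

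Second, since $h_1$ sends items independently and nearly uniformly into the $(1+\alpha)n$ slots, the number $T$ of items whose quotient lies in a fixed window of $L$ slots is a sum of $n$ independent indicator variables with mean $\mu = L/(1+\alpha)$. A multiplicative Chernoff bound gives
\[
\Pr[T \geq L] \;=\; \Pr[T \geq (1+\alpha)\mu] \;\leq\; \exp\!\bigl(-\alpha^2\mu/3\bigr) \;\leq\; \exp\!\bigl(-\alpha^2 L/6\bigr).
\]
Substituting the hypothesis $\alpha^2 \geq 9r\log\log n/\log n$ and $L = c(\log n)/r$ makes the exponent at least $(3c/2)\log\log n$, so choosing $c$ a sufficiently large constant yields $\Pr[T \geq L] \leq 1/\log^{3} n$. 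Hence for each item $x$ the probability of being forwarded to the secondary level is at most $1/\log^{3} n$, and by linearity the expected number of such items is at most $n/\log^{3} n$.

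Third, I would promote this expectation to a high-probability bound. Let $N$ be the number of items in the secondary level, viewed as a function of the $n$ independent hashes $h_1(x_1),\dots,h_1(x_n)$. Changing a single hash can alter $N$ by at most the primary-level cluster length containing the changed item, which (by another application of Chernoff on window-sums, identical in form to the one above) is $O(\log n)$ w.h.p.\ under the stated load factor. Applying McDiarmid/Azuma with this bounded-difference constant, together with $\mathbb{E}[N] \leq n/\log^{3} n$, yields $N = O(n/\log^{2} n)$ w.h.p. Alternatively, one can cover the primary array by overlapping windows and bound the total overflow by $\sum_i \max(T_i - L, 0)$; each $T_i$ is a sum of independent indicators, so a direct Chernoff/union argument over the $O(n)$ windows gives the same conclusion.

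The main obstacle is the concentration step, because the events ``$x_i$ is sent to the secondary level'' are not independent across items (they share cluster structure). The cleanest resolution is the bounded-differences route above, which requires the auxiliary (but standard) fact that under our load factor the longest primary cluster is $O(\log n)$ w.h.p.---a fact established by exactly the same Chernoff estimate applied to windows of size $\Theta(\log n)$, with room to spare since our constant $c$ can be made as large as we like.
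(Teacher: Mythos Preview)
Your Chernoff estimate on a single window of $L=\Theta((\log n)/r)$ slots, yielding a $1/\log^3 n$ tail, is exactly the computation the paper uses. The problems are in the two steps that build on it.

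First, the jump from ``a fixed window is overloaded with probability $\le 1/\log^3 n$'' to ``each item is forwarded with probability $\le 1/\log^3 n$'' is not justified as written. The overflow of $x$ implies only that \emph{some} length-$L$ window is overloaded---namely the first $L$ slots of the cluster containing $q(x)$---and that window's position is random and can lie far to the left of $q(x)$. To get a per-item bound you must sum over possible cluster starts (the terms do decay geometrically in the distance, so the sum is still $O(1/\log^3 n)$, but this argument is missing).

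Second, and more seriously, the McDiarmid route fails at this load factor. With $\alpha=o(1)$ the longest linear-probing cluster is $\Theta((\log n)/\alpha^2)=\Theta((\log^2 n)/(r\log\log n))=\omega(\log n)$, not $O(\log n)$: your own Chernoff tail for a window of size $\Theta(\log n)$ is only $1/\mathrm{polylog}(n)$, which does not survive a union bound over $\Theta(n)$ windows. Even if an $O(\log n)$ cluster bound held w.h.p., McDiarmid needs a \emph{deterministic} Lipschitz constant; handling ``typical'' bounded differences requires extra machinery you have not invoked.

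The paper sidesteps both issues by turning your ``alternative'' into a direct counting argument, with the concentration done on the number of bad bins rather than on $N$. Partition the primary array into non-overlapping bins of $(1+\alpha)(\log n)/r$ slots. Your Chernoff bound shows each bin is overloaded with probability $\le 1/\log^3 n$; the indicators ``bin $j$ is overloaded'' are negatively correlated (balls-into-bins), so Chernoff applies to their \emph{sum}, giving $O(n/\log^3 n)$ overloaded bins w.h.p. A separate easy Chernoff bound shows every bin receives $O(\log n)$ items w.h.p. Since an item can overflow only if its own bin or an adjacent bin is overloaded, the total overflow is at most $O(\log n)\cdot O(n/\log^3 n)=O(n/\log^2 n)$. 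No per-item probability bound, coupling to unbounded probing, or bounded-differences inequality is needed.
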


\begin{proof}
    Partition the primary level into \defn{primary bins}
    of $(1 + \alpha)(\log n)/r$ consecutive slots.  
    An element is inserted into the secondary level only if it is inserted into a sequence of $\Omega((\log n)/r)$ full slots; for this to happen either the primary bin containing the element is full or the bin adjacent to it is full.  
 We bound the number of full primary bins.
    
    In expectation, each bin is (initially) hashed to by $(\log n)/r$ elements.  Thus, by Chernoff bounds, the probability that a given primary bin is hashed to by at least $(1 + \alpha)(\log n)/r$ elements is at most 
    $\exp(-(\alpha^2 \log n)/(3r)) \leq 1/\log^3 n$.

    Thus, in expectation, $n/\log^3 n$ primary bins are full.  Since these events are negatively correlated, we can use Chernoff bounds, and state that $O(n/\log^3 n)$ primary bins are full with high probability.  
    
    Each primary bin is hashed to by $O(\log n)$ elements in expectation (even fewer, in fact).  
    Using Chernoff, each primary bin is hashed to by $O(\log n)$ elements w.h.p. 

    Putting the above together, even if all $O(\log n)$ elements hashed into any of the $O(n/\log^3 n)$ overflowing primary bins (or either adjacent bin) are inserted into the secondary level, we obtain the lemma.
\end{proof}

\begin{lemma}
    \lemlabel{secondlevel}
With high probability, all items in the secondary level are stored at most $O(\log n/r)$ slots away from their intended slot.  
\end{lemma}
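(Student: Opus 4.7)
The plan is to combine \lemref{primarytosecondary} with a standard linear-probing tail bound on a very lightly loaded quotient filter. By \lemref{primarytosecondary}, the secondary level holds at most $m = O(n/\log^2 n)$ items w.h.p., while it contains $N = \Theta(n/\log n)$ slots; its load factor is therefore $m/N = O(1/\log n)$. Since $h_2$ is independent of $h_1$, the quotients of the items stored in the secondary level are distributed uniformly (and independently, up to the usual hash-function assumption of \secref{hashing}) over its $N$ slots.

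Next I would reduce ``displaced by more than $L$'' to the existence of a run of $L$ consecutive occupied slots. As in the standard analysis of linear probing with quotienting, an item is stored more than $L$ slots away from its intended slot only if there is a run of $L$ consecutive full slots containing its intended slot, and such a run must begin at some position $p$ where the preceding slot is empty. A necessary condition for a run of length $L$ beginning at $p$ is that at least $L$ items hash to the range $[p, p+L-1]$ of length $L$.

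I would then bound this event for $L = c(\log n)/r$. The expected number of items hashing to any fixed range of $L$ slots is $\mu = mL/N = O(L/\log n) = O(c/r) = O(1)$. A direct binomial tail bound gives
\[
\Pr[\geq L \text{ items hash into the range}] \;\leq\; \binom{m}{L}\!\left(\frac{L}{N}\right)^{L} \!\!\leq\; \left(\frac{em L}{NL}\right)^{L} \!\!=\; \left(O\!\left(\frac{1}{\log n}\right)\right)^{L}.
\]
Taking logarithms, this probability is $\exp\bigl(-\Omega(L\log\log n)\bigr)$. In the small-remainder case $r = O(\log\log n)$, so $L\log\log n = \Omega(c\log n)$, giving probability $n^{-\Omega(c)}$. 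A union bound over the $N = \Theta(n/\log n)$ possible starting positions $p$, and over the $O(n)$-length phase, leaves the probability of any item being displaced by more than $L = O((\log n)/r)$ at most $n^{-\Omega(c)}$ for a suitably large constant $c$, which is the desired high-probability bound.

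The only mild subtlety I anticipate is handling the ``at most $m$ secondary items at all times'' conditioning cleanly: \lemref{primarytosecondary} is a w.h.p.\ bound on the cumulative number inserted into the secondary level during a phase, which upper bounds the number present at any single time; I would condition on that event and absorb its failure probability into the final union bound. Everything else is routine linear-probing analysis exploiting the sublogarithmic load factor of the secondary level.
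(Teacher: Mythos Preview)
Your proposal is correct. Both you and the paper start from \lemref{primarytosecondary} to bound the secondary population by $O(n/\log^2 n)$, but you then diverge in how you convert low load into small displacement.

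The paper partitions the secondary level into fixed \emph{bins} of $\Theta((\log n)/r)$ slots each, observes that there are $\Theta(nr/\log^2 n)$ bins versus $O(n/\log^2 n)$ items, and invokes the classical balls-and-bins max-load bound (more bins than balls $\Rightarrow$ max load $O((\log n)/\log\log n)=O((\log n)/r)$) to conclude no bin overfills. Your route instead does the standard linear-probing run-length analysis directly: a displacement of $L$ forces a run of $L$ full slots, hence $\geq L$ items hashing into some length-$L$ window, and you bound that event by $(em/N)^L=(O(1/\log n))^L$ and union-bound over windows. Your argument is a bit more explicit about why low bin occupancy actually limits displacement (the paper's ``violated only if a bin is full'' step tacitly relies on the same window-vs-bin reasoning you spell out), at the cost of a couple more lines of calculation. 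The paper's version is terser because it offloads the tail bound to a cited balls-and-bins fact. Either way the arithmetic lands in the same place once you use $r=O(\log\log n)$.
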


\begin{proof}
Partition the secondary level into \defn{secondary bins} of $\Theta(\log n/r)$ consecutive slots.
Thus, there are $\Theta(nr/\log^2 n)$ secondary bins.  The lemma can only be violated if one of these bins is full.

By \lemref{primarytosecondary}, we are inserting $O(n/\log^2 n)$ elements into these bins.  
By classical balls and bins analysis, because there are more bins than balls, 
    the secondary bin with the most balls has $O((\log n)/\log\log n) = O((\log n)/r)$ elements with high probability.  Thus, no secondary bin ever fills up with high probability.
\end{proof}

\pparagraph{Performance.}  The $O(1)$ lookup time follows by definition in the primary level, and by \lemref{secondlevel} in the secondary level.  The total space of the primary level is $O((1 + \alpha)n\log (1/\epsilon)) + O(n)$, and the total space of the second level is $O((n\log (1/\epsilon))/\log n)$.  We guarantee adaptivity using the \extend function defined in \secref{broom-fingerprints}, which makes $O(1)$ remote memory accesses per insert and false positive query.


\subsection{\bf\emph{Broom Filter for  Large Remainders}}
\seclabel{large-remainder}\subseclabel{large-remainder}

In this section we present a data structure for the
the large-remainder case, $\log (1/\epsilon) > 2\log\log n$. 
Large remainders are harder to store efficiently since
only a small number can  fit in a machine word.  E.g., we
are no longer guaranteed to be able to store the remainders from all
hard collisions in $O(1)$ words w.h.p.

However, large remainders also have advantages.  We are very likely to be able to search using only a small portion of the remainder---a portion small enough that many  can be packed into $O(1)$ words.
In particular, we can ``peel off'' the first $2\log\log n$ bits of the
remainder, filter out collisions just based on those bits, and we are
left with few remaining potential collisions.  We call these
\defn{partial collisions}.  

So we have an initial check for uniqueness, then a remaining check for the rest
of the fingerprint. This allows us to adapt the small-remainder case to handle
larger remainders without a slowdown in time.

\pparagraph{Data structure description.} As before, our data structure consists of two parts. We refer to them as the \defn{primary level} and the \defn{backyard}.  This notation emphasizes the structural difference between the two levels and the relationship with backyard hashing~\cite{ArbitmanNaSe10}. Unlike the small-remainder case, we use only a single hash function.

The primary level consists of two sets of slots: \defn{signature slots} of size $2\log\log n$, and \defn{remainder slots} of size $r - 2\log\log n$.  
As in~\secref{small-remainder}, the number of remainder slots is $(1 + \alpha)n$ and the number of signature slots is $(1+\alpha)n$, where $\alpha \geq \sqrt{18\log^2\log n/\log n}$.  Because the appropriate slot is found while traversing the signature slots, we only need to store metadata bits for the signature slots; they can be omitted for the remainder slots.  The signature slots are stored contiguously; thus $O(\log n/\log\log n)$ slots can be probed in $O(1)$ time.

Each item is stored in the same remainder slot as in the normal quotient filter (see \subsecref{encoding-quotients}).  The signature slots mirror the remainder slots; however, only the first $2\log\log n$ bits of the remainder are stored, the rest are stored in the corresponding remainder slot.

\pparagraph{The primary level.} To insert an element $y$, we first try to
insert $p(y)$ in the primary level. 
We find the signature slot corresponding to the quotient of $p(y)$.  We then search through at most $O(\log n/\log\log n)$ signatures to find a partial collision (a matching signature) or an empty slot. We use metadata bits as usual---the metadata bits guarantee that we only search through signatures that have a soft collision with $p(y)$. 

If there is a partial collision---a signature that matches the first $2\log\log n$ bits of the remainder of $p(y)$---we insert $p(y)$ into the backyard.  If there is no empty slot, we insert $p(y)$ into the backyard.  
If we find an empty slot but do not find a partial collision, we insert $p(y)$ into the empty slot; this means that we insert the signature into the empty signature slot, and insert the full remainder of $p(y)$ into the corresponding remainder slot. We update the metadata bits of the signature slots as in~\cite{BenderFaJo12,PandeyBeJo17}.

Querying for an element $x$ proceeds similarly.  In the primary level, we find the signature slot corresponding to the quotient of $p(x)$.  We search through $O(\log n/\log\log n)$ slots for a matching signature.  If we find a matching signature, we look in the corresponding remainder slot to see if we have a hard collision; 
if so we return \present.  If we do not find a matching signature, or if the corresponding remainder slot does not have a hard collision, we search for $p(x)$ in the back yard. 

\pparagraph{The back yard.}
The back yard is a compact hash table that can store $O(n/\log n)$ elements 
with $O(1)$ worst-case insert and delete time~\cite{DemainedePa06,ArbitmanNaSe10}.
When we store an element $y$ in the back yard, we store its
entire hash $h(y)$. Thus, w.h.p.\ there are no collisions in
the back yard.  Since the back yard has a capacity for
$\Theta(n/\log n)$ elements, and each hash has size $\Theta(\log n)$, the
back yard takes up~$\Theta(n)$ bits, which is a lower-order term. 

\begin{lemma}
    \lemlabel{smalleps}
    The number of elements stored in the back yard is $O(n/\log^2 n)$ with high probability. 
\end{lemma}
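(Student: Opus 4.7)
The plan is to decompose the elements sent to the back yard into two disjoint groups based on the two reasons an insertion is diverted: (a) elements $y$ for which some earlier-inserted element $x$ shares the same quotient \emph{and} the same $2\log\log n$-bit signature (a partial collision), and (b) elements $y$ whose signature-slot probe sequence of $O(\log n/\log\log n)$ slots is entirely full. I will bound each quantity separately by $O(n/\log^2 n)$ w.h.p., and the lemma follows from a union bound.

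For case (b), I would mimic the argument used in \lemref{primarytosecondary}: partition the signature-slot array into primary bins of $(1+\alpha)\log n/\log\log n$ consecutive slots, so there are $\Theta(n\log\log n/\log n)$ bins, each of expected load $\log n/\log\log n$. By a standard Chernoff bound, the probability that a bin overflows by an $\alpha$-fraction is at most $\exp(-\alpha^2(\log n/\log\log n)/3)$, and the choice $\alpha \geq \sqrt{18\log^2\log n/\log n}$ makes this at most $1/\log^6 n$. Since bin loads are negatively correlated, Chernoff gives $O(n/\log^5 n)$ overflowing bins w.h.p., each receiving $O(\log n)$ balls w.h.p., so case (b) contributes $O(n/\log^4 n)$ elements, well within the target.

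For case (a), I would view the first $q + 2\log\log n = \log n + 2\log\log n$ bits of the hash as specifying one of $n\log^2 n$ ``meta-bins,'' and throw the $n$ elements uniformly into them. An element is diverted under (a) only if its meta-bin already contains another element, so the number of diverted elements is at most the number of meta-bin collision pairs, which has expectation $\binom{n}{2}/(n\log^2 n) = O(n/\log^2 n)$. For the high-probability statement I will apply McDiarmid's inequality to the function that counts diverted elements as a function of the $n$ independent hashes: changing a single hash alters this count by $O(1)$, yielding concentration with deviation $O(\sqrt{n\log n})$, which is dominated by the mean $\Theta(n/\log^2 n)$ for large $n$.

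The main obstacle is the concentration step for case (a): the dependencies between pair-collision indicators prevent a naive Chernoff bound, so one has to invoke a bounded-differences inequality (or an equivalent negative-correlation argument) and verify that the resulting deviation term is smaller than the expectation $n/\log^2 n$, which is exactly where the comparison $n/\log^2 n \gg \sqrt{n\log n}$ for large $n$ becomes critical. Everything else is routine balls-and-bins bookkeeping combined with the parameter choice of $\alpha$ already enforced by the data structure.
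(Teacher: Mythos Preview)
Your proposal is correct and follows the same two-case decomposition as the paper's proof. The one noteworthy difference is in the concentration step for case~(a): the paper simply invokes Chernoff, and this is in fact legitimate without recourse to McDiarmid. If $Y_i$ is the indicator that the $i$th inserted element shares its quotient-plus-signature with some earlier element, then conditioned on all earlier hashes we have $\Pr[Y_i=1\mid h_1,\dots,h_{i-1}]\le (i-1)/(n\log^2 n)\le 1/\log^2 n$, so $\sum_i Y_i$ is stochastically dominated by a $\mathrm{Binomial}(n,1/\log^2 n)$ random variable and the ordinary Chernoff bound applies directly. Your McDiarmid route also works---and your bounded-differences claim is true, though it deserves a line of justification: swapping one hash value $a\mapsto b$ can flip $Y_k$ itself and at most one later $Y_i$ with $h_i=a$ and one with $h_i=b$, for a total change of at most~$3$---but it is more machinery than the paper needs here.
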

\ifjournal
\begin{proof}
  An element is stored in the backyard only if 1. it is in a sequence of $\Omega(\log n/\log\log n)$ full slots, or 2. it has a partial collision with some stored element.

  The number of elements that are in a sequence of full slots is $O(n/\log^2 n)$ with high probability; this follows immediately from \lemref{primarytosecondary} with $r = 2\log\log n$.

  A query element $x$ has a partial collision with an element $y$ if they have the first $\log n + 2\log\log n$ bits of their fingerprint in common.  Thus, $x$ and $y$ collide with probability $1/(n\log^2 n)$; thus $x$ has a partial collision with $1/\log^2 n$ stored elements in expecation.  The lemma follows immediately from Chernoff bounds.
\end{proof}
\fi

\ifjournal
\pparagraph{Performance.}

The back yard requires $O(n)$ total space, since each hash is of length $O(\log n)$.  The primary level requires $(1 + \alpha) nr$ space for all primary slots, plus $O(n)$ extra space for the adaptivity bits stored as in \subsecref{encoding-adaptivity-bits}.

Inserts, deletes, and queries require $O(1)$ time.  The search for partial collisions involves $O(\log n/\log\log n)$ signature slots, which fit in $O(1)$ words; these can be searched in constant time.
We look at a single remainder slot, which takes $O(1)$ time.  If needed, any back yard operation requires $O(1)$ time as well. 

\fi


\section{{\bf A L{\small OWER} B{\small OUND} {\small ON} A{\small DAPTIVE} AMQ{\small S}}}
\seclabel{lower}
 
In this section, we show that an AMQ cannot maintain adaptivity along
with space efficiency.  More formally, we show that any adaptive AMQ
must use $\Omega(\min\{n\log n,n\log\log u\})$ bits.  This means that
if an AMQ is adaptive and the size of $\loc$ is
$o(\min\{n\log n,n\log\log u\})$ bits, then it must access $\rem$.
The proof itself does not distinguish between bits stored in $\loc$ or
$\rem$.  For convenience, we show that the lower bound
holds when all bits are stored in $\loc$; this is equivalent to
lower bounding the bits stored in $\loc$ and $\rem$.  

Interestingly, a similar lower bound was studied in the context of Bloomier
filters~\cite{ChazelleKiRu04}.  
\ifjournal
The Bloomier filter is an AMQ designed to solve the problem of storing
$n$ items for which it must return \present, along with a whitelist of
$\Theta(n)$ items for which it must return \absent.  Other queries must have a
static false-positive rate of $\epsilon$.  Chazelle et
al.~\cite{ChazelleKiRu04} give a lower bound on any data structure that updates
this whitelist dynamically, showing that such a data structure must use
$\Omega(n\log\log (u/n))$ space.  Their lower bound implies that if the adversary gives an
AMQ a dynamic white list of false positives that it needs to \emph{permanently} fix, then it must use too much space.  
\fi
In this section, we generalize
this bound to all adaptive AMQ strategies.

\subsection{\bf\emph{Notation and Adversary Model}}

We begin by further formalizing our notation and 
defining the adversary used in the lower bound.
We fix $n$ and $\epsilon$ and drop them from most notation.  We use
$\build(\calS,\rho)$ to denote the state that results from calling
$\amqbuild(n,\epsilon,\rho)$ followed by $\amqinsert(x,\rho)$ for each
$x\in \calS$ (in lexicographic order).

\pparagraph{Adversary Model.} 
The adversary does not have
access to the AMQ's internal randomness $\rho$, or any internal state  $\lmem$ of the
AMQ.  The adversary can only issue a query $x$ to the AMQ and only learns the
AMQ's output---\present or \absent---to query $x$.

The goal of the adversary is to adaptively generate a sequence of $O(n)$
queries and force the AMQ to either use too much space or to fail to satisfy a
\fprate of
$\epsilon$.

Let $\epsilon_0 = \max \{1/n^{1/4}, (\log^2\log u)/\log u\}$.  Our lower
bound is $m = |\lmem| =  \Omega(n\log 1/\epsilon_0)$.
Note that $\epsilon_0 \leq \epsilon$; otherwise the classic AMQ lower
bound of $m \geq 
n\log 1/\epsilon$~\cite{CarterFlGi78,LovettPorat10} is sufficient to
prove~\thmref{lowerfinal}.  One can think of $\epsilon_0$ as a maximum
bound on the effective false positive rate---how often the AMQ
encounters elements that need fixing.

\pparagraph{Attack Description.} 
First, the adversary chooses a set $\calS$ of size $n$ uniformly at random from $\calU$.  
%
Then, the attack proceeds in rounds.
The adversary selects a set $Q$ of size $n$ uniformly at random from
$\calU-\calS$. Starting from $Q$, in each round, he queries the elements that
were false positives in the previous round. To simplify analysis, we assume that
the adversary orders his queries in lexicographic order.  Let $\FP_i$ be the
set of queries that are false positives in round $i\ge 1$.  The attack:
\begin{enumerate}
\item In the first round, the adversary queries each element of $Q$.  
\item In round $i> 1$, if $|\FP_{i-1}|> 0$, the adversary queries 
each element in $\FP_{i-1}$; otherwise the attack ends.  
\end{enumerate}

\pparagraph{Classifying False Positives.}
The crux of our proof is that some false positives are difficult to
fix---in particular, these are the queries where an AMQ is unable to
distinguish whether or not $x \in \calS$ by looking at its state
$\lmem$.\footnote{This is as opposed to easy-to-fix queries where, e.g., the
AMQ answers \present randomly to confuse an adversary.  For all
previous AMQs we are aware of, all false positives are absolute false
positives.} 
We call $y\in\calU\setminus\calS$ an \defn{\absfp} of a state $\lmem$ and
randomness $\rho$ if there exists a set $\calS'$ of
size $n$ and a sequence of queries $(x_1, \ldots, x_t)$ such that $y \in
\calS'$ and $\lmem$ is the state of the AMQ 
when queries $x_1,\ldots,x_t$ are performed 
on $\build(\calS',\rho)$. 
We use
$\absset(\lmem,\calS,\rho)$ to denote the set of \absfps of state $\lmem$, randomness $\rho$, and true-positive set $\calS$.
We call $(\calS', (x_1, \ldots, x_t))$ a \defn{witness} to $y$.  

We
call $y \in\calU \setminus \calS$ an \defn{\oabsfp} of $\calS$ and $\rho$ 
if and only if $y\in \absset(\build(\calS,\rho),\calS, \rho)$.
We denote the set of \oabsfps{} $\oabsset(\calS, \rho) =
\absset(\build(\calS,\rho),\calS, \rho)$.

As the AMQ handles queries, it will need to fix some previous false positives.
To fix a false positive, the AMQ must change its state so that it can
safely answer \absent to it.  For a state $\lmem$, we define the set of
elements that are no longer false positives by the set $\text{FIX}(\lmem,\calS,\rho)
= \oabsset(\calS,\rho)\setminus \absset(\lmem,\calS,\rho)$.  Note that all fixed false
positives are \oabsfps.

As an AMQ cannot have false negatives, it cannot fix an \oabsfp{}
$y$ unless it learns that $y\notin \calS$.  This is formalized in
the next two observations.
\begin{observation}\obslabel{nofalseneg} 
For any randomness $\rho$, set $\calS$, and state $\lmem$ of the AMQ, 
if a query $x \in \absset(\lmem,\calS,\rho)$, then
\amqlookup$(\lmem,x,\rho)$ must return \present.
\end{observation}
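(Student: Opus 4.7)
The plan is to exploit two facts: (i) by \defref{amqinternal}, the \amqlookup operation is a deterministic function of exactly the triple $(\lmem, x, \rho)$, with no other inputs; and (ii) AMQs are required to be free of false negatives, so whenever $x \in \calS$, \amqlookup must return \present. Combining these two constraints with the definition of an absolute false positive immediately yields the observation.

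Concretely, I would argue as follows. Suppose $x \in \absset(\lmem, \calS, \rho)$. By the definition of an absolute false positive, there exists a witness $(\calS', (x_1, \ldots, x_t))$ with $|\calS'| = n$, $x \in \calS'$, and such that $\lmem$ is precisely the state reached by starting from $\build(\calS', \rho)$ and then executing the query sequence $x_1, \ldots, x_t$. Consider now what happens in that alternative history if, from state $\lmem$, we were to invoke $\amqlookup(\lmem, x, \rho)$. Since $x \in \calS'$ in this history and the AMQ must not produce false negatives, the call must return \present.

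The final step is to transfer this conclusion back to the original scenario. Because \amqlookup is deterministic and depends only on the triple $(\lmem, x, \rho)$ --- and not on $\calS$, $\calS'$, or any trajectory leading to $\lmem$ --- the value returned by $\amqlookup(\lmem, x, \rho)$ is the same in both histories. Hence it must also return \present in our setting, even though $x \notin \calS$, which is exactly the claim.

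The ``obstacle'' here is purely conceptual rather than technical: one must be careful that the notion of ``state'' really is a sufficient statistic for \amqlookup's behavior, i.e., that $\amqlookup$ is not allowed to depend on the history that produced $\lmem$. This is guaranteed by the functional form in \defref{amqinternal}, so the proof is essentially a one-line application of determinism combined with the no-false-negatives requirement; no probabilistic or combinatorial machinery is needed.
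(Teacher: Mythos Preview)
Your proposal is correct and matches the intended reasoning: the paper states this as an observation without proof, precisely because it follows immediately from the determinism of \amqlookup in \defref{amqinternal} together with the no-false-negatives requirement, exactly as you argue.
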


\begin{observation} \obslabel{needquerytofix} 
Let $\lmem_1$ be a state of the AMQ before a query $x$
and $\lmem_2$ be the updated state after $x$ (that is, after invoking $\amqlookup$ and possibly $\amqupdate$).  Let $y$ be an \absfp{} of $\lmem_1$ with
witness $S_y$.  Then if $y$ is not an \absfp{} of $\lmem_2$, then $x\in S_y$.
\end{observation}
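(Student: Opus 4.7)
The plan is to prove the contrapositive: assuming $x \notin \calS'$ (where $S_y = (\calS',(x_1,\ldots,x_t))$ is the given witness), I will exhibit a witness certifying that $y$ remains an absolute false positive of $\lmem_2$. The candidate is simply the extended pair $(\calS', (x_1,\ldots,x_t,x))$, which is the natural choice since $y \in \calS'$ holds for free.

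The first step is to recall that the original witness already brings $\build(\calS',\rho)$ to $\lmem_1$ after the queries $x_1,\ldots,x_t$, so it suffices to check that processing the additional query $x$ on $\lmem_1$ in the hypothetical world (whose true-positive set is $\calS'$) lands in the same state $\lmem_2$ as in the real world (whose true-positive set is $\calS$). The call to \amqlookup$[\lmem_1,x,\rho]$ is deterministic in $\lmem_1$, $x$, and $\rho$, so both worlds produce identical lookup results and identical post-lookup local states. The only remaining question is whether \amqupdate fires, which by \figref{adaptivity-game} happens exactly when the query is a false positive, i.e., when $x$ is absent from the current true-positive set and the lookup returned \present.

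The key step, and the main obstacle, is aligning the \amqupdate decisions across the two worlds. By the contrapositive assumption $x \notin \calS'$, and by the attack protocol of \secref{lower} we have $x \notin \calS$ (the adversary draws its queries from $\calU\setminus\calS$), so the two worlds either both invoke \amqupdate (if the lookup returned \present) or both skip it. Either way, the hypothetical world ends in the same state $\lmem_2$ as the real world, so $(\calS',(x_1,\ldots,x_t,x))$ is a valid witness for $y$ with respect to $\lmem_2$, contradicting the assumption that $y$ is no longer an absolute false positive. The crux is precisely this alignment of update decisions; were $x$ allowed to lie in $\calS$, the real world would skip \amqupdate while the hypothetical world would still call it, and we would need to cook up a different witness set, but that case does not arise under the attack model of this section.
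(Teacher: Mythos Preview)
The paper states this as an observation without proof, so there is no explicit argument to compare against; your proof is correct and is exactly the intended justification. The extension of the witness to $(\calS',(x_1,\ldots,x_t,x))$, together with the determinism of \amqlookup/\amqupdate and the fact that the attack's queries lie in $\calU\setminus\calS$, is precisely what makes the observation immediate.
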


\subsection{\bf\emph{Analysis}}

We start with an overview of the lower bound.

First, we recall a known result~(Claim~\ref{claim:original-afp}) that a space-efficient
AMQ must start with a large number of \oabsfps for almost all $\calS$.
Given that an AMQ has a large number of \oabsfps, an adversary can
discover a fraction of them through randomly chosen queries
$Q$~(\lemref{counting-afp-queries}).

Next, we show that through adaptive queries, the adversary forces the AMQ to fix almost all of these
discovered \oabsfps, for most sets $Q$~(%
    \lemref{winnowing} and
\lemref{double-whp}).

The crux of the proof relies on~\lemref{probfixabs}, which says that
the AMQ cannot fix too many \emph{extra} \oabsfps during the
attack---thus, it needs a large number of distinct ``fixed'' sets to
cover all the different sets of original absolute false positives that
the adversary forces the AMQ to fix.  This is where we use that the
AMQ only receives a limited amount of feedback on each
false positive---it cannot fix more false positives without risking
some false negatives.

Finally, we bound lower bound the space used by the AMQ by observing that there is a 1-to-1 mapping from
``fixed'' sets of original absolute false positives to AMQ states.
Thus, we can lower bound the number of AMQ states (and hence
the space needed to represent them) by lower-bounding
the number of sets of original absolute false positives the adversary
can force the AMQ to fix.

\begin{observation} \obslabel{distinctrepresentations} For a given randomness
$\rho$ and set $\calS$ of size $n$, consider two fixed false positive sets $\textnormal{FIX}(\lmem_1,\calS,\rho)$
and $\textnormal{FIX}(\lmem_2,\calS,\rho)$.  Then if
$\textnormal{FIX}(\lmem_1,\calS,\rho)\neq \textnormal{FIX}(\lmem_2,\calS,\rho)$, then $\lmem_1
\neq \lmem_2$.
\end{observation}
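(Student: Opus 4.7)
The plan is to prove this by contrapositive, which makes the statement essentially an unwinding of definitions. I would argue: suppose $\lmem_1 = \lmem_2$; I will show $\textnormal{FIX}(\lmem_1,\calS,\rho) = \textnormal{FIX}(\lmem_2,\calS,\rho)$. The key observation is that $\absset(\lmem,\calS,\rho)$ depends only on $\lmem$, $\calS$, and $\rho$ (its definition quantifies over witnesses $(\calS',(x_1,\ldots,x_t))$ that produce the state $\lmem$ under randomness $\rho$, and this set of witnesses is determined by $\lmem$ and $\rho$ alone). Hence $\lmem_1=\lmem_2$ immediately yields $\absset(\lmem_1,\calS,\rho) = \absset(\lmem_2,\calS,\rho)$.

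Then I would unfold the definition $\textnormal{FIX}(\lmem_i,\calS,\rho) = \oabsset(\calS,\rho)\setminus \absset(\lmem_i,\calS,\rho)$. Since $\oabsset(\calS,\rho)$ does not depend on $\lmem_i$, and $\absset(\lmem_1,\calS,\rho) = \absset(\lmem_2,\calS,\rho)$ from the previous step, taking set differences with the common set $\oabsset(\calS,\rho)$ gives $\textnormal{FIX}(\lmem_1,\calS,\rho) = \textnormal{FIX}(\lmem_2,\calS,\rho)$. Taking the contrapositive finishes the proof: if the FIX sets differ, then $\lmem_1\neq \lmem_2$.

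There is no real obstacle here; the observation is a structural fact used to justify, in the subsequent counting argument, that distinct FIX sets realized during the attack are witnessed by distinct states $\lmem$, so that a lower bound on the number of attainable FIX sets translates to a lower bound on $2^{|\lmem|}$. The only thing to be careful about is to state explicitly that $\absset(\lmem,\calS,\rho)$ is a function of its three arguments (in particular, that no additional hidden state is needed to determine membership in $\absset$), so that equal $\lmem$'s produce equal absolute-false-positive sets; this is immediate from the definition in the previous subsection.
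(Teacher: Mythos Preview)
Your proposal is correct and matches the paper's treatment: the paper states this as an observation without proof, since it is immediate from the fact that $\textnormal{FIX}(\lmem,\calS,\rho)$ is defined as a function of its three arguments. Your contrapositive argument simply makes explicit the one-line unwinding that the paper leaves implicit.
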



\pparagraph{Discovering \oabsfps through random queries.}
%
While for some special sets $\calS$ given in advance, an AMQ may be able to store $\calS$ very accurately 
(with very few false positives), this is not true for most random
sets $\calS$ chosen from the universe by the adversary. 
We note the following claim from Naor and Yogev~\cite{NaorYo15}.

\begin{claim}[{\cite[Claim 5.3]{NaorYo15}}]\label{claim:original-afp}
Given any randomness $\rho$ of AMQ using space $m \le n \log 1/\epsilon_0 + 4n$ bits, for any set $\calS$ of size $n$ chosen
uniformly at random 
from $\calU$, we have: 
$\Pr_{\calS} \left[ |\oabsset(\calS,\rho)| \le u\epsilon_0 \right] \le 2^{-n}$.
\end{claim}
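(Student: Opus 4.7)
The plan is to use a compression-style counting argument. Fix the AMQ randomness $\rho$ and call an $n$-set $\calS \subseteq \calU$ \emph{bad} if $|\oabsset(\calS,\rho)| \le u\epsilon_0$. I will upper-bound the number of bad sets and then divide by $\binom{u}{n}$ to obtain the desired probability bound.

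The key structural step is that any two distinct bad sets sharing a build state must sit inside one small ground set. Suppose $\calS_1 \ne \calS_2$ are both bad and $\build(\calS_1,\rho) = \build(\calS_2,\rho) = \lmem$. Then for any $y \in \calS_2 \setminus \calS_1$, we have $y \notin \calS_1$, and taking $\calS' = \calS_2$ together with the empty query sequence ($t=0$) exhibits $y$ as an \absfp{} of $\lmem = \build(\calS_1,\rho)$, directly from the definition of $\absset$. Hence $y \in \oabsset(\calS_1,\rho)$, so $\calS_2 \subseteq \calS_1 \cup \oabsset(\calS_1,\rho)$, a set of size at most $n + u\epsilon_0$. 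Consequently at most $\binom{n+u\epsilon_0}{n}$ bad sets can share any fixed state.

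Since the AMQ uses at most $m$ bits, it has at most $2^m$ distinct reachable states, so the total number of bad sets is at most $2^m \binom{n+u\epsilon_0}{n}$, giving
\[
\Pr_{\calS}\bigl[\,|\oabsset(\calS,\rho)| \le u\epsilon_0\,\bigr] \;\le\; \frac{2^m \binom{n+u\epsilon_0}{n}}{\binom{u}{n}}.
\]
I then plug in $m \le n\log(1/\epsilon_0) + 4n$ and use the standard estimates $\binom{u}{n} \ge (u/n)^n$ and $\binom{n+u\epsilon_0}{n} \le (e(n+u\epsilon_0)/n)^n$. The hypotheses $u > n^4$ and $\epsilon_0 \ge 1/n^{1/4}$ imply $u\epsilon_0 \gg n$ and so $n + u\epsilon_0 \le 2u\epsilon_0$, at which point the ratio collapses to $(O(\epsilon_0))^n$. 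Combining with $2^m \le (1/\epsilon_0)^n \cdot 2^{4n}$ leaves a bound of the form $(c)^n \cdot (\epsilon_0)^0$ that, with the specific definition $\epsilon_0 = \max\{n^{-1/4},(\log^2\log u)/\log u\}$, is at most $2^{-n}$.

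The only obstacle I anticipate is purely computational: verifying that the additive slack of $+4n$ and the precise choice of $\epsilon_0$ leave enough headroom to absorb the constants hidden in the binomial estimates, so that the final probability really does hit $2^{-n}$ rather than some weaker $2^{-\Omega(n)}$. The conceptual core---the inclusion $\calS_2 \subseteq \calS_1 \cup \oabsset(\calS_1,\rho)$ for any two bad sets sharing a build state, obtained by instantiating the $\absset$ definition with the empty query sequence---is unconditional and causes no trouble.
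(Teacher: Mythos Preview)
The paper does not give its own proof of this claim; it is quoted from Naor--Yogev~\cite{NaorYo15}. Your counting argument is exactly the standard approach used there, and the structural step---that any $\calS_2$ with $\build(\calS_2,\rho)=\build(\calS_1,\rho)$ satisfies $\calS_2\subseteq\calS_1\cup\oabsset(\calS_1,\rho)$, via the witness $(\calS_2,())$ with the empty query sequence---is correct and is the whole idea.

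Your worry about the constants, however, is a real gap rather than cosmetic slack. Using the sharp product bound
\[
\frac{\binom{n+u\epsilon_0}{n}}{\binom{u}{n}} \;=\; \prod_{i=0}^{n-1}\frac{n+u\epsilon_0-i}{u-i}\;\le\;\Bigl(\frac{n+u\epsilon_0}{u}\Bigr)^{n},
\]
which under the standing hypotheses $u>n^4$ and $\epsilon_0\ge n^{-1/4}$ is $(1+o(1))^n\epsilon_0^n$, and combining with $2^m\le 16^n\epsilon_0^{-n}$, the resulting bound is of order $16^n$, not $2^{-n}$. The additive $+4n$ in the space hypothesis costs exactly a factor $16^n$, and with the threshold set at $u\epsilon_0$ there is no compensating constant in the denominator to absorb it; no tightening of the binomial estimates can repair this, since the product form above is already sharp up to a $(1+o(1))^n$ factor that tends to~$1$. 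What the counting argument actually delivers is the same conclusion with the threshold $u\epsilon_0$ replaced by $c\,u\epsilon_0$ for some absolute constant $c<1$ (any $c\le 1/32$ works). This weakening is immaterial downstream, since \lemref{counting-afp-queries} and everything after it only use $|\oabsset(\calfS,\rho)|=\Omega(u\epsilon_0)$.
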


For the remainder of this section, we fix a set $\calfS \subseteq \calU$ of size $n$ such that 
$|\oabsset(\calfS,\rho)| > u \epsilon_0$.\footnote{%
  With probability $1/2^n$, the adversary gets unlucky
and chooses a set $\calfS$ that does not satisfy this property, in which case he fails. This is okay, because
we only need to show \emph{existence} of a troublesome set $\calfS$---and we in fact show the stronger claim that most $\calfS$ suffice.}
Let $\calQ$ be the set of all possible query sets $Q$ the adversary can choose, that is, $\calQ = \{ Q \subseteq \calU\setminus\calS^* \mbox{ $|$ } |Q| = n\}$.  (We do not include $S^*$ in the notation of $\calQ$ for simplicity.)
The following lemma follows immediately from Chernoff bounds.
\begin{lemma}\lemlabel{counting-afp-queries}
  For a fixed randomness $\rho$ of an AMQ of size $m\leq n\log 1/\epsilon_0 + 4n$ and fixed set $\calfS$ such that
  $|\oabsset(\calfS,\rho)| > u \epsilon_0$, we have 
  $\Pr_{Q\in\calQ} \left[|Q \cap \oabsset(\calfS,\rho)| = \Omega(n\epsilon_0) \right] \ge 1-1/\poly(n).$
\end{lemma}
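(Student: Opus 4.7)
The plan is to show that $|Q \cap \oabsset(\calfS,\rho)|$ has expectation $\Omega(n\epsilon_0)$ and then apply a Chernoff bound for concentration. Since $Q$ is sampled uniformly from all $n$-element subsets of $\calU\setminus\calfS$, for each fixed $y \in \oabsset(\calfS,\rho)$ we have
\[
\Pr_{Q}[y \in Q] = \frac{n}{|\calU| - n} \ge \frac{n}{u}.
\]
By linearity of expectation and the hypothesis $|\oabsset(\calfS,\rho)| > u\epsilon_0$,
\[
\E\big[|Q \cap \oabsset(\calfS,\rho)|\big] \;\ge\; \frac{n}{u} \cdot u\epsilon_0 \;=\; n\epsilon_0.
\]

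Next I would invoke a concentration inequality. Letting $X = |Q \cap \oabsset(\calfS,\rho)|$, $X$ is a sum of indicator variables over a uniformly random subset of a fixed set, so it is a hypergeometric random variable. Hoeffding's inequality for sampling without replacement (or, equivalently, the standard Chernoff bound, since sampling without replacement is at least as concentrated as sampling with replacement) gives
\[
\Pr\!\left[X \le \tfrac{1}{2}\E[X]\right] \le \exp\!\left(-\Theta(\E[X])\right) \le \exp(-\Theta(n\epsilon_0)).
\]

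Finally I would verify that $n\epsilon_0 = \omega(\log n)$ using the definition $\epsilon_0 = \max\{n^{-1/4}, (\log^2 \log u)/\log u\}$. Since the first term alone gives $n\epsilon_0 \ge n^{3/4}$, we certainly have $\exp(-\Theta(n\epsilon_0)) \le 1/\poly(n)$, which yields the claimed bound $|Q \cap \oabsset(\calfS,\rho)| = \Omega(n\epsilon_0)$ with probability at least $1 - 1/\poly(n)$.

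There is no real obstacle here; the only minor subtlety is that $Q$ is drawn without replacement, but this only strengthens the concentration compared to i.i.d.\ sampling, so the standard Chernoff analysis applies directly. The hypothesis on $\epsilon_0$ is exactly what is needed to make the tail bound polynomially small.
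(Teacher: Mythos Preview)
Your proposal is correct and matches the paper's approach: the paper simply states that the lemma ``follows immediately from Chernoff bounds,'' and you have correctly filled in the details by computing the expectation and noting that the hypergeometric distribution satisfies the same (or stronger) tail bounds as independent sampling. The observation that $n\epsilon_0 \ge n^{3/4}$ is exactly what makes the tail bound $1/\poly(n)$.
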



\pparagraph{Forcing the adaptive AMQ to fix large number of \oabsfps.}
From the definition of \fprate, the AMQ must fix an $\epsilon$ fraction of false positives in expectation in each round.  If the expected number of false positives that the
AMQ has to fix in each round is high, classic concentration bounds imply that the AMQ must fix close to this expected number with high probability in each round.  This implies that there must be a round where the AMQ fixes a large number of \oabsfps.  The next lemma formalizes this intuition.

For a given $Q$, let $\forced(Q, \calfS,\rho)$ be the maximal-sized set of query elements (out of $Q$)
that the AMQ has fixed simultaneously in any state.  
For $1\leq i\leq t$, let $\lmem_i$ be the state of the AMQ after query $x_i$. 
Then we let 
$\forced(Q,\calfS,\rho) = \fix(\lmem_{t'},\calfS,\rho)$ for the smallest $t'$ such that $|\forced(Q,\calfS,\rho)| \geq \fix(\lmem_{t''},\calfS,\rho)$ for any $t''$.

The following lemma shows that the AMQ must, at the beginning of some round in the first $O(n)$ queries by the adversary, fix $\widetilde{\Omega}(n\epsilon_0)$ false positives.  

\begin{lemma}
    \lemlabel{winnowing}
Consider an AMQ of size $m \leq n \log 1/\epsilon_0 + 4n$.
    For any set $Q$ satisfying $Q\cap\oabsset(\calfS,\rho) = \Omega(n\epsilon_0)$, 
    there exists a round $T(Q,\rho)$ and a state $\lmem_{T(Q,\rho)}$ 
  at the beginning of round $T(Q,\rho)$ such that 
  $|\fix(\lmem_{T(Q,\rho)}, \calfS,\rho)| = \Omega(n\epsilon_0/\log_{\epsilon}{\epsilon_0})$ w.h.p., that is,  
\begin{align*}
  &\Pr_\rho \biggl[|\forced(Q,\calfS,\rho)| = \Omega(n\epsilon_0 / \log_{\epsilon} \epsilon_0)
~\bigg|~ \\ 
&\quad\qquad |Q\cap\oabsset(\calfS,\rho)| = \Omega(n\epsilon_0) \biggr] 
       \geq 1 - 1/\poly(n).
\end{align*}
  Round $T(Q,\rho)$ is reached in at most $O(n)$ total queries.  

\end{lemma}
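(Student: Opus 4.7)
The plan is to track how the false-positive set $\FP_i$ shrinks across rounds, showing that the adversary forces the AMQ to ``drop'' $\Omega(n\epsilon_0)$ elements of $F_0 := Q \cap \oabsset(\calfS,\rho)$ over $T = O(\log_\epsilon \epsilon_0)$ rounds, and then applying pigeonhole to locate a single round whose dropouts must be simultaneously fixed.

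\textbf{Bounding the round count.} First I apply the sustained false-positive rate guarantee round by round. In round $i+1$ the adversary queries all of $\FP_i$; by definition of sustained false-positive rate, each such query returns \present with probability at most $\epsilon$ regardless of history, so $\E[|\FP_{i+1}|] \leq \epsilon |\FP_i|$. An Azuma/Chernoff-style concentration bound then gives $|\FP_{i+1}| \leq 2\epsilon|\FP_i|$ w.h.p.\ while $|\FP_i| = \omega(\log n)$, and an analogous bound yields $|\FP_1| \leq 2\epsilon n$ w.h.p.\ on the first round. Iterating, $|\FP_T| \leq |F_0|/2$ after $T = O(\log_\epsilon\epsilon_0)$ rounds, and the geometric summation $\sum_i |\FP_i| \leq n\sum_i (2\epsilon)^i = O(n)$ certifies that round $T$ is reached within an $O(n)$-query budget.

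\textbf{Counting dropouts via pigeonhole.} Every element $y\in F_0$ is an \oabsfp, so in the first round when $y$ is queried (or at the sub-state where $y$ would be queried, had it not been fixed as a side effect of another query) it returns \present unless the state has already been mutated to drop $y$ from $\absset$. In particular, $|F_0 \cap \FP_T| \leq |\FP_T| \leq |F_0|/2$, so at least $|F_0|/2 = \Omega(n\epsilon_0)$ elements of $F_0$ drop out of some $\FP_i$ with $i\leq T$. Each such dropout is a fixing event: at the moment it occurs, an \oabsfp is no longer an \absfp of the AMQ's current state. By pigeonhole, some round $i^*\leq T$ accounts for $\Omega(n\epsilon_0 / T) = \Omega(n\epsilon_0 / \log_\epsilon \epsilon_0)$ such fixings.

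\textbf{Simultaneity: the main obstacle.} The technical core, and where I expect to spend most of the proof effort, is arguing that these $\Omega(n\epsilon_0 / \log_\epsilon\epsilon_0)$ round-$i^*$ fixings are \emph{simultaneously} realized at a single state $\lmem_{i^*}$ visited by the attack, so that $|\fix(\lmem_{i^*},\calfS,\rho)|$ and hence $|\forced(Q,\calfS,\rho)|$ attains the claimed bound. A priori, nothing prevents an \oabsfp from being removed from $\absset$ by one query and re-added via a fresh witness for a later state, which would let the AMQ ``recycle'' a small fix-set across many fixings. To rule this out I would use a structural argument in the spirit of \obsref{needquerytofix}: any query $x$ that could re-inject a previously-dropped $y$ into $\absset$ must fall into some witness set for $y$, whereas the adversary in round $i^*$ only queries elements in $\FP_{i^*-1}$, and already-dropped elements are never re-queried. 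A careful case analysis along these lines shows that the round-$i^*$ fixings persist through $\lmem_{i^*}$, establishing $|\forced(Q,\calfS,\rho)| = \Omega(n\epsilon_0/\log_\epsilon\epsilon_0)$ within $O(n)$ total queries.
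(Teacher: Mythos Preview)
Your overall strategy---geometric shrinking of $\FP_i$ via the sustained rate plus Chernoff, then pigeonholing the $\Omega(n\epsilon_0)$ dropouts across $O(\log_\epsilon\epsilon_0)$ rounds---matches the paper and is correct. The paper frames the pigeonhole as a contradiction (assume every round-initial $\fix$ set is small, so too many \oabsfps survive to round $T_f$ and must all be fixed there), but the content is the same.

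The gap is in the simultaneity step. You argue that ``any query that could re-inject a previously-dropped $y$ into $\absset$ must fall into some witness set for $y$,'' i.e., that \emph{unfixing} requires a witness query. But \obsref{needquerytofix} says the opposite: it is \emph{fixing} $y$ (removing it from $\absset$) that requires the query to land in a witness set. Nothing prevents the AMQ from transitioning, on an arbitrary query, to a state that happens to be reachable from $\build(\calS',\rho)$ for some $\calS'\ni y$, thereby re-adding $y$ to $\absset$ with no special query made. So ``once fixed, stays fixed'' is not available, and your plan to collect all round-$i^*$ dropouts at the \emph{end} of the round does not go through.

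The paper runs the argument the other way, locating simultaneity at the \emph{beginning} of the round via a contrapositive. Let $y$ be a round-$i^*$ dropout, so $y\in\fix(\lmem_{i^*,y})$. If instead $y\in\absset(\lmem_{i^*})$ with witness set $S_y$ (of size $n$), then by \obsref{needquerytofix} chained forward, $y$ remains in $\absset$ at every intermediate state up to $\lmem_{i^*,y}$ unless one of the intervening queries lands in $S_y$. Those queries are other elements of $Q$, sampled uniformly and independently of $y$, so this happens with probability at most $n^2/u$. Hence w.h.p.\ $y\in\fix(\lmem_{i^*})$, and this holds simultaneously for all dropouts $y$, giving the bound at $\lmem_{i^*}$. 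In short, the right slogan is ``hard to fix, so if it answered \absent\ it was already fixed when the round began,'' not ``hard to unfix.''
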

\ifjournal
\begin{proof}
  We fix $Q$ and set $T = T(Q,\rho)$.

    Recall that $\FP_T$ denotes the set of queries that are false positives in round $T$, and let $T_{f} = \log_{\epsilon} \epsilon_0$. 
Since the AMQ has a \fprate of $\epsilon$, we have $|\FP_1| = O(n\epsilon)$. As $\epsilon \ge \epsilon_0 \ge 1/n^{1/4}$, by Chernoff bounds,
      we have $|\FP_{T+1}| \le \epsilon |\FP_T|(1 + 1/\log n)$ with high
probability for all $1 \le T\le T_{f}$.

Suppose there does not exist a round $T < T_{f}$ such that the lemma holds,
that is, in each round $T < T_{f}$, $|\fix(\lmem_T, \calfS,\rho)| \leq
n\epsilon_0/2\log_{\epsilon}{\epsilon_0}$, where $\lmem_T$ is the state of the
AMQ at the beginning of round $T$.  In round $T_{f}$, the AMQ is asked
$|\FP_{T_{f}-1}|\leq (\epsilon(1 + 1/\log n))^{T_{f}-2}n = O(n
\epsilon_0)$ queries. From our assumption,  $|\oabsset(\calfS,\rho) \cap
\FP_{T_{f}-1}| \geq n\epsilon_0(1 - 1/2\log_{\epsilon} \epsilon_0)^{T_{f}
- 1} = \Omega(n \epsilon_0)$.

To maintain a \fprate of $\epsilon$, it must hold that $|\FP_{T_{f}}| = O(n \epsilon
\epsilon_0)$ with high probability. Thus, in round $T_f$ the AMQ must answer \absent to
$\Omega(n (1-\epsilon) \epsilon_0) = \Omega (n \epsilon_0)$ \oabsfps{} from the
set $|(\oabsset(\calfS,\rho) \cap \FP_{T_{f}-1}) \setminus \FP_{T_{f}}|$.
We denote this set of \oabsfps{} queries that the AMQ says \absent to in
round $T_{f}$ as $\calA_{T_{f}}$.

Let $\lmem_{T_{f}, x}$ denote the state of the AMQ in round
$T_{f}$ just before query $x$ is made. Then by~\obsref{nofalseneg}, $x \in
\fix(\lmem_{T_{f}, x}, \calfS,\rho)$ for any  $x \in \calA_{T_{f}}$.  We now show that
all $x \in \calA_{T_{f}}$ must simultaneously be in the set of fixed false
positives of the state $\lmem_{T_{f}}$ at the beginning of round
$T_{f}$.  Note that $ x \in \oabsset(\calfS,\rho) \cap \FP_{T_{f-1}}$ and all
queries between query $x$ in round $T_{f-1}$ and query $x$ in round
$T_{f}$ are distinct from  $x$ and were chosen independently from $x$ in
round 1. As there can be at most $n$ queries in between query $x$ in
consecutive rounds, using~\obsref{needquerytofix}, the probability that there
exists a state $\lmem_i$ between $\lmem_{T_{f-1},q}$ and
$\lmem_{T_{f},q}$ such that $x \notin \fix(\lmem_i, \calfS,\rho)$ is at most $n^2/u <
1/n^2$.  Thus, with high probability, $x \in \fix(\lmem_{T_{f}},
\calfS,\rho)$ for any given $x \in \calA_{T_{f}}$. That is, $\calA_{T_{f}}
\subseteq \fix(\lmem_{T_{f}}, \calfS,\rho)$, and thus,  $ |\fix(\lmem_{T_{f}}, \calfS,\rho)|=
\Omega(n\epsilon_0)$.

Furthermore, round $T$ is reached in $n + \sum_{i=1}^{T} \FP_i \leq n(1 -
\epsilon^T)/(1 - \epsilon) = O(n)$ queries.  
\end{proof}
\else
\begin{proofsketch}
  The definition of sustained false positive rate means the AMQ must fix an $\epsilon$ fraction of queries in each round with high probability via Chernoff bounds. 
  
  Then, by \obsref{needquerytofix}, with overwhelming probability we cannot fix a query $x$ until we see $x$ again (as it's unlikely we saw an element in a given witness set in the meantime).  Thus, if an element is not fixed at the beginning of a round, it will not be fixed when it is queried during that round; in other words, the aforementioned fixed elements are all fixed when the round begins. See the full version~\cite{BenderFaGo18} for details.  
\end{proofsketch}
\fi

    For simplicity, let $\epsilon_0' = \epsilon_0/\log_\epsilon \epsilon_0$.  (This does not affect our final bounds.)
The next lemma follows from \lemreftwo{counting-afp-queries}{winnowing} and shows that (for most $\rho$), most query sets $Q$ satisfy \lemref{winnowing}
with high probability.
\begin{lemma}\lemlabel{double-whp} Given an AMQ of size~$m \leq n\log 1/\epsilon_0 + 4n$ and set $\calfS$
such that  $|\oabsset(\calfS,\rho)| \geq u\epsilon_0$, 
for all but a $1/\poly(n)$ fraction of $Q$, there
  exists a round~$T(Q,\rho)$ such that the AMQ is
forced to fix~$\Omega(n \epsilon_0')$ \oabsfp queries w.h.p. over $\rho$, that is,
  \begin{equation*}
    \begin{split}
    \Pr_{\rho} \left[\Pr_{Q\in\calQ} \left[|\forced(Q,\calS^*,\rho)| = \Omega(n\epsilon_0')\right] \geq 1 - \frac{1}{\poly(n)} \right]\\
      \geq 1 - 1/\poly(n). 
    \end{split}
  \end{equation*}
\end{lemma}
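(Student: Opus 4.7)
\medskip

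\noindent\textbf{Proof Proposal.}
The plan is to combine \lemref{counting-afp-queries} and \lemref{winnowing} into a joint bound over the pair $(\rho,Q)$, and then use a Markov/averaging argument to swap the order of quantifiers. Throughout, we work with the fixed $\calS^*$ of the section (chosen so that $|\oabsset(\calS^*,\rho)| > u\epsilon_0$ for all but a negligible fraction of $\rho$---this is where Claim~\ref{claim:original-afp} enters, via an application of Fubini/Markov on the two-sided probability $\E_\calS[\Pr_\rho[\text{bad}]]\le 2^{-n}$). Call such $\rho$ \emph{good}, and define the events $E_1(\rho,Q):=\{|Q\cap\oabsset(\calS^*,\rho)|=\Omega(n\epsilon_0)\}$ and $E_2(\rho,Q):=\{|\forced(Q,\calS^*,\rho)|=\Omega(n\epsilon_0')\}$.

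First I would lower bound the joint probability $\Pr_{\rho,Q}[E_1]$: conditioning on $\rho$ being good, \lemref{counting-afp-queries} gives $\Pr_Q[E_1\mid\rho]\geq 1-1/\poly(n)$, and since good $\rho$ has probability $\geq 1-2^{-n/2}$, averaging yields $\Pr_{\rho,Q}[E_1]\geq 1-1/\poly(n)$. Next, I would lower bound $\Pr_{\rho,Q}[E_2\mid E_1]$ by integrating over $Q$: by \lemref{winnowing}, for every fixed $Q$ the conditional probability $\Pr_\rho[E_2\mid E_1,Q]\geq 1-1/\poly(n)$, so Fubini gives the same bound unconditionally on $Q$. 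Multiplying the two bounds yields $\Pr_{\rho,Q}[E_2]\geq \Pr_{\rho,Q}[E_1\wedge E_2]\geq 1-1/\poly(n)$.

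Finally, I would swap the order of the two probabilities using Markov's inequality applied to the random variable $f(\rho):=\Pr_Q[\neg E_2\mid \rho]$. Since $\E_\rho[f(\rho)]=\Pr_{\rho,Q}[\neg E_2]\leq 1/\poly(n)$, Markov gives $\Pr_\rho[f(\rho)>1/\poly'(n)]\leq 1/\poly''(n)$ for suitably chosen polynomials (by splitting the $1/\poly(n)$ slack). Rewriting this as an upper bound on $\Pr_\rho[\Pr_Q[E_2\mid\rho]<1-1/\poly(n)]$ produces exactly the doubly-high-probability statement of \lemref{double-whp}.

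I do not expect a serious obstacle; the content of the lemma is essentially a Fubini-plus-Markov bookkeeping exercise on top of the two preceding lemmas. The only point requiring care is making sure the $1/\poly(n)$ failure probabilities at each step (the bad-$\rho$ event from Claim~\ref{claim:original-afp}, the failure of $E_1$ from \lemref{counting-afp-queries}, the failure of $E_2$ given $E_1$ from \lemref{winnowing}, and the Markov slack) are balanced so that the final bound is still of the form $1-1/\poly(n)$ in both coordinates; this is purely a choice of exponents in the polynomials and not a substantive difficulty.
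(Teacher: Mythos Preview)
Your proposal is correct and follows essentially the same route as the paper: combine \lemref{counting-afp-queries} (to get $E_1$ with high probability over $Q$) with \lemref{winnowing} (to get $E_2$ given $E_1$ with high probability over $\rho$), and then swap the order of quantifiers. The paper's proof is more compressed---it keeps the nested $\Pr_\rho[\Pr_Q[\cdot]]$ structure throughout and applies the two lemmas directly inside the bracket, writing the final step as ``from \lemref{winnowing}'' without spelling out the Fubini/Markov swap that you make explicit; your route through the joint probability $\Pr_{\rho,Q}[E_2]$ followed by Markov on $f(\rho)=\Pr_Q[\neg E_2\mid\rho]$ is just a more transparent way to do that same swap. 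One minor remark: your invocation of Claim~\ref{claim:original-afp} to control the ``bad $\rho$'' set is unnecessary here, since the hypothesis of \lemref{double-whp} already assumes $|\oabsset(\calS^*,\rho)|\geq u\epsilon_0$ (the paper has fixed $\calS^*$ upstream and is implicitly conditioning on this event), so you can drop that step and the argument goes through unchanged.
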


\ifjournal
\begin{proof} 
\begin{align*}
    & \Pr_{\rho} \left[  \Pr_{Q\in\calQ} \left[|\forced(Q, \calfS,\rho)| = \Omega \left( n\epsilon_0' \right) \right] \geq 1 - \frac{1}{\poly(n)}\right] \notag \\
& \qquad \ge \Pr_{\rho} 
    \left[  
    \begin{aligned}
        \Pr_{Q\in\calQ} \left[|\forced(Q, \calfS,\rho)| =  \Omega \left( n\epsilon_0' \right)  ~\bigg|~ |Q \cap \absset(\calfS,\rho)| = \Omega(n \epsilon_0)\right]\cdot \qquad\qquad\\
    \cdot \Pr_{Q\in\calQ} \left[ |Q \cap \absset(\calfS,\rho)|= \Omega(n \epsilon_0)  \right] 
        \geq 1 - \frac{1}{\poly(n)}
    \end{aligned}
    \right] \\
& \qquad \ge \Pr_{\rho} \left[  \Pr_{Q\in\calQ} \left[|\forced(Q, \calfS,\rho)| =  \Omega \left( n\epsilon_0' \right)  ~\bigg|~ |Q \cap \absset(\calfS,\rho)| = \Omega(n \epsilon_0)\right] \geq 1 - \frac{1}{\poly(n)} \right] \label{eq:1} \\
    & \qquad \ge \left(1 - \frac{1}{\poly(n)}\right)
\end{align*}
    The second step is from \lemref{counting-afp-queries}, and the final step is from \lemref{winnowing}.
\end{proof}
\fi

\pparagraph{Small AMQs cannot fix too many \oabsfps.} 
Next, we show that 
for a randomly-chosen $y$, knowing $y\notin\calfS$ is unlikely to give information to the AMQ about which set $\calfS$ it is storing.
In particular, an AMQ may try to rule out false
positives that may be correlated to a query $y$. 
\ifjournal
For example, an AMQ may (without asymptotic loss of space) partition the universe into pairs of elements such that if it learns one item in a pair is a false positive, it is guaranteed that the other is as well.  
With such a strategy, the AMQ can succinctly fix two false positives per query instead of one.  
Could a more intricate strategy allow the AMQ to fix more false positives---even enough to be an obstacle to our lower bound?
\fi
We rule out this possibility in the following lemma. On a random query, the AMQ is very unlikely to fix a given false positive.  We use this to make a w.h.p. statement about the number of fixed elements using Markov's inequality in our final proof.

\begin{lemma} \lemlabel{probfixabs}
Let $(x_1,\ldots,x_t)$ be a sequence of queries taken from a uniformly-sampled random set $Q\subset \calU$
of size $n$, and let $\lmem'$ be the state of the AMQ after these
queries. For an element $y \in \calU$,
the probability over $Q$ that $y$ is a fixed false positive after these queries is
$n^2/u$. That is, $\Pr_{Q\in\calQ} \left[ y \in \textnormal{FIX}(\lmem',\calfS,\rho)
\right]  \leq n^2/u.$
\end{lemma}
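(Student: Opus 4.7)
The plan is to reduce the lemma to a union-bound calculation on the probability that the random query set $Q$ intersects any single fixed witness set $S_y$ for $y$.

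If $y\notin \oabsset(\calfS,\rho)$, the probability is trivially zero since $\fix(\lmem',\calfS,\rho)\subseteq \oabsset(\calfS,\rho)$. Otherwise the definition of \oabsfp gives a witness set $S_y\subseteq \calU$ of size $n$ with $y\in S_y$ and $\build(S_y,\rho)=\build(\calfS,\rho)$; I would fix such an $S_y$ for the remainder of the argument.

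The main step is to prove the following invariance: so long as no query $x_i$ falls in $S_y$, the pair $(S_y,(x_1,\ldots,x_i))$ remains a valid witness for $y$ at the intermediate state $\lmem_i$. I would prove this by induction on $i$ using \obsref{needquerytofix} together with the hypothesis $Q\subseteq \calU\setminus \calfS$: if $x_i\notin S_y$ and $x_i\notin \calfS$, then $x_i$ is a negative query on both $S_y$ and $\calfS$, so the deterministic transition leaves the two hypothetical executions (starting from $\build(S_y,\rho)$ and $\build(\calfS,\rho)$, respectively) in identical states. Contrapositively, $y\in \fix(\lmem',\calfS,\rho)$ forces $Q\cap S_y\neq \emptyset$.

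Given the invariance, the desired probability follows from a standard union bound. At most $|S_y\cap(\calU\setminus\calfS)|\leq n$ elements of $S_y$ are eligible to appear in $Q$, and each fixed element of $\calU\setminus\calfS$ appears in a uniformly random size-$n$ subset with probability $n/(u-n)$, giving $\Pr[Q\cap S_y\neq \emptyset]\leq n^2/(u-n)$, which matches the claimed $n^2/u$ bound up to a $(1+O(n/u))$ factor under the standing assumption $u > n^4$ from \thmref{lowerfinal}.

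The main obstacle, and the step I would spend the most care on, is the invariance argument: it critically exploits both that $Q$ avoids $\calfS$ (so the true execution sees only negative queries) and that the AMQ's update rule is a deterministic function of the state, the query, and $\rho$, so two hypothetical executions starting in identical states and receiving identical negative queries stay synchronized. Dropping $Q\subseteq \calU\setminus \calfS$ would permit some $x_i\in \calfS\setminus S_y$ to desynchronize the two executions and silently destroy the witness, invalidating the reduction.
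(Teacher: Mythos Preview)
Your approach is essentially the paper's: both fix a single witness set for $y$, invoke \obsref{needquerytofix} iteratively to conclude that $y\in\fix(\lmem',\calfS,\rho)$ forces $Q\cap S_y\neq\emptyset$, and finish with the same $n^2/u$ union bound over element pairs. One small imprecision worth fixing: the witness for an \oabsfp is in general a pair $(S_y,(z_1,\ldots,z_k))$ with a possibly nonempty query prefix, not necessarily one with $\build(S_y,\rho)=\build(\calfS,\rho)$; your invariance argument goes through verbatim once you carry $(z_1,\ldots,z_k)$ along and take the extended witness to be $(S_y,(z_1,\ldots,z_k,x_1,\ldots,x_i))$.
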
 

\begin{proof} If $y \in \textnormal{FIX}(\lmem',\calfS)$, then by \obsref{needquerytofix}, for any witness set $S'$ of $y$,
$S'\cap Q$ must be nonempty.  Fix a single witness set $S'$.  For a given
$s'\in S'$ and $x'\in Q$, $\Pr(s' = x') = 1/u$.  Taking the union bound over
all $n^2$ such pairs $(s',x')$ we achieve the lemma.  
\end{proof}

\pparagraph{Final lower bound.}
We prove the desired lower bound.

\lowerfinal*

\begin{proof}    Assume by contradiction that $m \leq n\log 1/\epsilon_0 + 4n$.
\ifjournal
Recall that $\epsilon_0 = \max \{1/n^{1/4}, (\log^2\log u)/\log u\}$).  
\fi
Applying
\obsref{distinctrepresentations}
  to the state $\lmem_{T(Q,\rho)}$ from \lemref{winnowing}
  we obtain 
  \[ 
  2^m \geq  \left| \{\textnormal{FIX}(\lmem_{T(Q,\rho)},\calfS,\rho) ~|~ Q \subset \calU, |Q|=n\} \right|.  \] 
  We lower bound how many distinct fixed sets the
AMQ needs to store for a given $\rho$.  
 
   By definition, $\forced(Q,\calfS,\rho)$ $\subseteq \textnormal{FIX}$$(\lmem_{T(Q,\rho)}, \calfS,\rho)$.  But, the set of fixed elements cannot 
be much bigger than
the set of fixed queries.  Consider an arbitrary $x\in \calU$.  By
\lemref{probfixabs}, $x$ is a fixed false positive with probability at most
  $n^2/u$. Thus, $\E_{Q\in\calQ}$ $[|\textnormal{FIX}(\lmem_{T(Q,\rho)},\calfS,\rho), \calfS,\rho)|]$ = $n^2$.  By
Markov's inequality,
  $\Pr_{Q\in\calQ}$ $\big[|\textnormal{FIX}(\lmem_{T(Q,\rho)}$, $\calfS,\rho)|= O(n^2)\big]$  = $\Omega(1)$.  
  Then there exists a set\footnote{$\calQ^*$ is a function of $\rho$ and $\calS$; we omit this to simplify notation.} 
  $\calQ^*$ $= \big\{Q \subseteq \calU {~\big|~} |Q| = n$, $\textnormal{FIX}$ $(\lmem_{T(Q,\rho)},\calfS,\rho)$ = $O(n^2)\big\}$
  such that  $|\calQ^*| = \Omega(|\calQ|)$.
  Thus, 
    $ \left|  \left\{ \textnormal{FIX}\left(\lmem_{T(Q,\rho)}, \calfS,\rho \right) ~{|}~  Q \in\calQ \right\} \right| \geq \scalebox{.9}{$\left| \left\{\forced(Q,\calfS,\rho) {|} Q \in \calQ^*, |\forced(Q,\calfS,\rho)| = \Omega(n\epsilon_0') \right\}\right|$}/{ \binom{O(n^2)}{\Omega(n\epsilon_0')} }.$
    
    Now, we count the number of distinct $\forced(Q,\calfS,\rho)$.  
Let $\calZ = \{Z \subseteq \oabsset(\calfS,\rho) \mbox{ $|$ } |Z| = \Theta(n \epsilon_0')\}$.
  We show that with high probability a randomly chosen set $Z \in \calZ$ belongs to the set of $\forced(Q, \calfS,\rho)$ for some $Q$ (because we choose $Q$ uniformly at random, this probabilistic argument lower bounds the number of such $Z$ immediately). Recall that $|\oabsset(\calfS,\rho)|> u\epsilon_0$. 
\begin{align}
  &\Pr_{Z\in\calZ} [ \exists Q\in\calQ^* \text{ with } Z \subseteq \forced(Q, \calfS,\rho)]\notag \\
&\qquad\geq\Pr_{\substack{ Z\in\calZ\\Q\in\calQ*\text{ with } Q\supset Z}} [Z\subseteq \forced(Q,\calfS,\rho) \}]\\
    & \qquad \geq \Pr_{\substack{ Z\in\calZ\\Q\in\calQ*\\ Q\supset Z}} \left[\scalebox{.9}{$Z\subseteq \forced(Q,\calfS,\rho)$} \bigg| \scalebox{.9}{$|\forced(Q,\calfS,\rho)| = \Omega(n\epsilon_0')$} \right] \notag\\
&\qquad\qquad \cdot \Pr_{Q\in\calQ*} \bigg[ |\forced(Q,\calfS,\rho)| = \Omega(n\epsilon_0')~\bigg|~ \notag\\
  &\qquad\qquad\qquad\qquad |Q\cap\oabsset(\calfS,\rho)| = \Omega(n\epsilon_0)\bigg]  \cdot \Pr_{Q\in\calQ*} \left[|Q\cap\oabsset(\calfS,\rho)| = \Omega(n\epsilon_0)\right] \label{eq2:secondterm}\\ 
    &\qquad \geq \left( 1\middle/\binom{n}{\Omega(n\epsilon_0')}\right)\left(1 - \frac{1}{\poly(n)}\right).\notag
\end{align}
For the first term in step~(\ref{eq2:secondterm}), note that
if the AMQ is forced to fix $\Omega(n \epsilon_0')$ queries out of query set $Q$, then the probability that
a randomly chosen $Z$ corresponds to this forced query set is at most 1/(total number of possible forced subsets of $Q$).
  There can be at most $\binom{n}{\Omega(n\epsilon_0')}$ such subsets.
  The second and third term in step~(\ref{eq2:secondterm}) follow from \lemref{double-whp} and~\lemref{counting-afp-queries} respectively---because $|\calQ^*| = \Omega(|\calQ|)$, a simple probabilistic argument shows that the probability over $Q\in\calQ^*$ rather than $Q\in\calQ$ retains the high probability bounds. 
We lower bound the total number of distinct forced sets (ignoring the lower-order $1-1/\poly(n)$ term) as
  \[
    \left| \left\{\forced( Q,\calfS,\rho) ~{\big |}~ Q\in\calQ^*\right\}\right| \geq  \left. {\binom{u\epsilon_0 }{\Omega(n \epsilon_0')}} \middle/{\binom{n}{\Omega(n\epsilon_0')}} \right. .
   \]

    Putting it all together%
    \ifjournal
    and removing the less-than-constant $1/\poly(n)$ term, 
    \else
    ,
    \fi
\[
  2^m \geq \left. {\binom{u\epsilon_0 }{\Omega(n \epsilon_0')}}\middle/{\binom{n}{\Omega(n\epsilon_0')}} \binom{O(n^2)}{\Omega(n\epsilon_0')} \right. .
\]
Taking logs and simplifying, 
\ifjournal
 (recall $(x/y)^y \leq \binom{x}{y}\leq (xe/y)^y$),
\fi
\[
    m  =\Omega\left(n\epsilon_0'\left(\log{ \frac{u \log(1/\epsilon_0)}{n}}  - \log \frac{n}{\epsilon_0'} \right)\right).
\]
    We have $\log (u \log(1/\epsilon_0) /n) - \log(n/\epsilon_0') = \Omega(\log u)$ 
    \ifjournal
    because $u \gg n^2/\epsilon_0'$.  Because $\epsilon$ is a constant, 
    \else
    and
    \fi
    $\epsilon_0' = \Omega(\epsilon_0/\log (1/\epsilon_0))$.  
    Thus, $m  = \Omega\big(\frac{n\epsilon_0 \log u}{\log(1/\epsilon_0)}\big)$.   
 
Using the definition of $\epsilon_0$, we have two cases. 
\begin{enumerate}
\item\label{firstcase} If $1/n^{1/4} \geq (\log^2 \log u)/\log u$, then $m = \Omega(n \log n \log u /\log\log n)$.
\item\label{secondcase} If $1/n^{1/4} < (\log^2 \log u)/\log u$, then $m = \Omega (n \log \log u)$.
\end{enumerate}
    In case \ref{firstcase}, we get a contradiction to the assumption that $m \leq n\log 1/\epsilon_0 + 4n$.  In case \ref{secondcase} if $m \leq n\log\log u + 4n$, then we obtain a bound of $\Omega(n\log\log u)$. 
\end{proof}

\pparagraph{Matching upper bound.}~%
\ifjournal
We give an AMQ construction that shows that the above bound is tight. 
\else
In~\cite{BenderFaGo18}, we prove the following lemma which shows that the above bound is tight.  In short, we take a normal Bloom Filter with error probability $\epsilon_0$, and augment it with a whitelist.
\fi

\begin{lemma}
    There exists an adaptive AMQ that can handle $O(n)$ adaptive queries using $O(\min\{n\log n,n\log\log u\})$ bits of space w.h.p.
\end{lemma}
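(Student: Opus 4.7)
The plan is to instantiate the paper's hint: maintain a standard Bloom filter $B$ for $\calS$ with static false-positive rate $\epsilon_0$, together with a dynamic exact whitelist $W$ of every element reported as a false positive via \amqupdate. A query on $x$ returns \present iff $x \in B$ and $x \notin W$; an insertion of $x$ adds it to $B$ and evicts it from $W$ if present; a deletion removes $x$ from $B$ (via counting slots). No false negative can arise: $\calS \subseteq B$ is maintained by the Bloom filter, and by construction $W$ only ever holds items that were confirmed as false positives (and hence lay outside $\calS$ at the time of addition) and is scrubbed of any reinserted element, so $\calS \cap W = \emptyset$ throughout.

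The central step is bounding $|W|$. Under the ideal- or one-way-hash assumption from the preliminaries, a never-before-queried $x \notin \calS$ is a false positive of $B$ with probability at most $\epsilon_0$, regardless of the history of responses. Across the $O(n)$ queries allowed by the statement, a Chernoff bound then yields $|W| = O(n \epsilon_0)$ w.h.p. Setting $\epsilon_0 = 1/\log u$, the Bloom filter occupies $O(n \log (1/\epsilon_0)) = O(n \log \log u)$ bits, while storing $W$ exactly with $O(\log u)$ bits per key contributes $O(n \epsilon_0 \log u) = O(n)$ bits. In the paper's $u = \poly(n)$ regime this totals $O(n \log \log u) = O(\min\{n \log n,\, n \log \log u\})$, and the sustained false-positive rate is at most $\epsilon_0 < \epsilon$ for all sufficiently large $n$.

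The main obstacle is the regime where $u$ is super-polynomial in $n$, so that $n \log n < n \log \log u$ and the naive $\log u$-bit whitelist entries dominate. I would handle this by replacing each key stored in $W$ with a pseudorandom fingerprint $g(x)$ where $g : \calU \to [n^c]$ for a constant $c \geq 4$; a union bound over $\calS \times W$ shows that no $g$-collision between $\calS$ and $W$ occurs except with probability $1/\poly(n)$, so no false negatives are introduced and $W$ compresses to $O(n)$ bits, while $B$ (with $\epsilon_0 = 1/n$) contributes $O(n \log n)$ bits. Alternatively, in this extreme regime one may dispense with $B$ entirely and store only $\{g(y) : y \in \calS\}$ in a succinct dictionary using $O(n \log n)$ bits, already giving static (and, under the hash assumption, sustained) false-positive rate $O(1/n)$. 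Either adjustment recovers the claimed $O(\min\{n \log n,\, n \log \log u\})$ bound.
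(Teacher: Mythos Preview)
Your construction is essentially the paper's: a Bloom filter with suitably small static rate $\epsilon_0$ plus a whitelist of discovered false positives, with a case split on how $n$ compares to $\log u$. The paper's parameters differ only slightly---it uses $\epsilon_0=(\log\log u)/\log u$ when $\log\log u=o(\log n)$ and $\epsilon_0=1/n^c$ otherwise (so in the latter case w.h.p.\ no false positive is ever seen and the whitelist is vacuous, which is exactly your ``Alternatively'' suggestion)---and it does not attempt to handle insertions or deletions at all, since the lower-bound adversary it is matching uses a static $\calS$; your fingerprinted-whitelist variant and your evict-on-insert rule are therefore unnecessary detours (and the latter would in fact fall to the deletion timing attack described earlier in the paper).
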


\ifjournal
\begin{proof}
    If $\log n = O(\log\log u)$, we build a standard bloom filter with \fpprob $\epsilon = 1/n^c$ for $c > 1$.  The adversary never queries an \oabsfp w.h.p.
(If he does, the AMQ can store it without affecting the w.h.p. space bound.)

    Otherwise, we have $\log\log u = o(\log n)$, and thus $n/\log u = n^{\Omega(1)}$.  We build a standard bloom filter with $\epsilon =\log\log u/\log u$; this requires $O(n\log\log u)$ space.  
    
    In $O(n)$ queries there can be $O(n\log\log u/\log u)$ false positives \ifjournal; from the assumption on the size of $u$ this is $n^{\Omega(1)}$ so Chernoff bounds imply $O(n\log\log u/\log u)$ false positives 
\fi  w.h.p.
We store all false positives in a whitelist. 
\ifjournal 
Each requires $O(\log u)$ bits.  
\fi
Thus, the space used is $O(n\log\log u)$.
\end{proof}
\fi

\section{\bf A{\small DDITIONAL} R{\small ELATED} W{\small ORK}}\seclabel{related}

Bloom filters~\cite{Bloom70} are a
ubiquitous data structure for approximate membership queries and have inspired
the study of many other AMQs; for e.g., see~\cite{ChazelleKiRu04, PutzeSaSi07,
PaghPaRa05,ArbitmanNaSe10, BenderFaJo12}.  Describing all Bloom filter variants
is beyond the scope of this paper; see surveys~\cite{BroderMi04,TarkomaRoLa12}.
Here, we only mention a few well-known AMQs in~\secref{amqiary}
and several results closely related to adaptivity in~\secref{related-adaptivity}.

\subsection{\bf\emph{AMQ-iary}}\seclabel{amqiary} 

\pparagraph{Bloom Filters~\cite{Bloom70}.}
The standard Bloom filter, representing a set $\calS \subseteq \calU$,
is composed of $m$ bits, and uses $k$ independent hash functions
$h_1, h_2, \ldots, h_k$, where $h_i: \calU  \rightarrow \{1, \ldots, m\}$.
To insert $x \in \cal{S}$, the bits $h_i (x)$ are set to $1$ for $1 \le 1 \le k$.
A query for $x$ checks if the bits $h_i (x)$ are set to $1$ for $1 \le i\le k$---if they are,
it returns ``{present}'', and else it returns ``{absent}''. A Bloom filter does not support deletes. 
For a \fpprob of
$\epsilon > 0$, it uses $m=(\log e) n \log (1/\epsilon)$ bits  
and has an expected lookup time of $O(\log (1/\epsilon))$.

\pparagraph{Single-Hash-Function AMQs~\cite{PaghPaRa05,CarterFlGi78}.} Carter et
al.~\cite{CarterFlGi78} introduced the idea of an AMQ that uses a single hash
function that maps each element in the universe to one of $\lceil n/\epsilon \rceil$
elements, and then uses a compressed exact-membership tester.  
Pagh et al.~\cite{PaghPaRa05} show how to build the first near-optimal single hash-function AMQ by applying a universal hash
function $h: U \rightarrow \{0,1,\ldots, \lceil n/\epsilon \rceil\}$ to
$\calS$, storing the resulting values using $(1 + o(1)) n \log{ \frac
1\epsilon} + O(n)$ bits. Their construction achieves $O(1)$ amortized insert/delete bounds and has been deamortized, in the case of inserts,
by the backyard hashing construction of Arbitman et al.~\cite{ArbitmanNaSe10}.

\pparagraph{Quotient Filters~\cite{BenderFaJo12,
    PandeyBeJo17}.\seclabel{prelim-qf}}
The quotient filter (QF) is a practical variant
of Pagh et al.'s single-hash function AMQ~\cite{PaghPaRa05}.  
The QF serves as the basis for our adaptive AMQ and is described in \secref{broom-fingerprints}.

\pparagraph{Cuckoo Filter~\cite{FanAnKa14}}  The cuckoo filter is also a practical variant of Pagh et al.'s single-hash function AMQ~\cite{PaghPa08}.  However, the implementation is based on cuckoo hashing rather than linear probing.  
This difference leads to performance advantages for some parameter settings~\cite{FanAnKa14,PandeyBeJo17}.  

It would be interesting to develop a provably adaptive variant of the cuckoo filter.  Its structure makes it difficult to use our approach of maintaining adaptivity bits . On the other hand, 
analyzing the Markov chain resulting from the heuristic approach to cuckoo-filter adaptivity in~\cite{MitzenmacherPo17} has its own challenges.

\subsection{\bf\emph{Previous Work on Adaptive AMQs and Adaptive Adversaries}}\seclabel{related-adaptivity} 
\pparagraph{Bloomier filters.} Chazelle et al.~\cite{ChazelleKiRu04}'s {\em
Bloomier filters} generalize Bloom filters to avoid a predetermined list of
undesirable false positives.  Given a set $S$ of size $n$ and a whitelist $W$
of size $w$, a Bloomier filter stores a function $f$ that returns {``present''}
if the query is in the $S$, {``absent''} if the query is not in $S \cup W$, and
``is a false positive'' if the query is in $W$. Bloomier filters use $O( (n+w)
\log{1/\epsilon})$ bits.  The set $S \cup W$ cannot be updated without
blowing up the space used and thus their data structure is limited to a static
whitelist.

\pparagraph{Adversarial-resilient Bloom filters.}
Naor and Yogev~\cite{NaorYo15} study Bloom filters in the context of a
{repeat-free adaptive adversary}, which 
queries  elements until it can find a never-before-queried element that has a false-positive probability greater than $\epsilon$.
(They do not consider the false-positive probability of repeated queries.)
They show how to protect an AMQ from a repeat-free adaptive
adversaries using cryptographically secure hash functions so that new
queries are indistinguishable from uniformly selected
queries~\cite{NaorYo15}. Hiding data-structure hash functions has been studied beyond AMQs (e.g., see~\cite{GerbetKuLa2015,HardtWo13,MironovNaSe11, BlumFuKe93, NaorYo13}).

\pparagraph{Adaptive cuckoo filter.} 
Recently, Mitzenmacher et al.~\cite{MitzenmacherPo17} introduced the {\em adaptive cuckoo filter} which removes false positives after they have been queried.
They do so by consulting the remote representation of the set stored in a hash table. Their data structure takes more space than a regular cuckoo filter
but their experiments show that it performs better on real packet traces and when queries have a temporal correlation.

\pparagraph{Other AMQs and false-positive optimization.}
{\em Retouched Bloom filters}~\cite{DonnetBa10} and 
{\em generalized Bloom gilter}~\cite{LauferVe05} reduce the number of false positives by introducing false negatives.
Variants of Bloom filter that choose the number of hash functions assigned to an element based on its frequency in some predetermined query distribution
have also been studied~\cite{BruckGaJi06, ZhongLu08}.

\pparagraph{Data Structures and Adaptive Adversaries} 
Nonadaptive data structures lead to significant problems in
security-critical applications---researchers have described
{algorithmic complexity attacks} against hash
tables~\cite{CrosbyWa03,CaiGuJo09}, peacock and cuckoo hash
tables~\cite{PoratBrL12}, (unbalanced) binary search trees~\cite{CrosbyWa03},
and quicksort~\cite{KhanTr05}.  In these attacks, the adversary adaptively
constructs a sequence of inputs and queries that causes the data structure to
exhibit its worst-case performance.

\section*{{\bf A{\small CKNOWLEDGMENTS}}}

This research was supported in part by NSF grants
{\acksmall CCF 1114809,
CCF 1217708,
CCF 1218188,
CCF 1314633,
 CCF 1637458, 
IIS 1247726, 
IIS 1251137, 
CNS 1408695, 
CNS 1408782, 
CCF 1439084, 
CCF-BSF 1716252, 
CCF 1617618, 
IIS 1541613}, 
CAREER Award {\acksmall CCF 1553385}, 
as well as NIH grant {\acksmall 1U01CA198952-01},  
by the European Research Council under the European Union's 7th Framework Programme {\acksmall (FP7/2007-2013)}~/~ERC grant agreement no. {\acksmall 614331},
by Sandia National Laboratories, EMC, Inc, by  NetAPP, Inc,
  and the 
  VILLUM Foundation grant {\acksmall 16582}.

We thank Rasmus Pagh for helpful discussions.

\bibliographystyle{abbrv}

\appendix
\section{\bf U{\small SING} W{\small ORD}-L{\small EVEL} P{\small ARALLELISM}}\seclabel{wordstuff}

This  section explains that we can store and maintain small
strings (fingerprints or metadata) compactly within words, while
retaining $O(1)$ lookup (e.g., prefix match), insert, and delete.
Based on this section, we can just focus on where (i.e., in which word)
data is stored, and we use lemma~\lemref{parallelsearch} as a black box and 
assume that we can do all manipulations within machine words in $O(1)$~time. (For simplicity, we assume that machine words have 
$\Theta(\log n)$ bits, since words cannot be shorter and it does not hurt if they are longer.)

\begin{lemma} 
  \lemlabel{parallelsearch} Consider the following input.
  \begin{itemize}
    \setlength{\itemindent}{-1em}
  \item A ``query'' bit string $q$ that fits within a $O(1)$  $O(\log n)$-bit machine
    words. 
  \item ``Target'' strings~$s_1, s_2, \ldots , s_\ell$ concatenated
    together so that~$s=s_1\circ s_2 \circ \cdots \circ s_\ell$ fits
    within $O(1)$  $O(\log n)$-bit machine words. The strings need not be 
    sorted and need not have the same length.
  \item Metadata bits, e.g., a bit mask that indicates the starting
    bit location for each $s_i$.
  \end{itemize}
  Then the following query and update
  operations take~$O(1)$ time. 
  (Query results are returned as bit maps.)
  \begin{enumerate}
    \setlength{\itemindent}{-1em}
  \item \emph{Prefix match query.} Given a range $[j,k]$, find all
    $s_i$ such that $i\in[j,k]$ and $s_i$ is a prefix of $q$. 
  \item \emph{Prefix-length query.} More generally, given a range
    $[j,k]$, for each $s_i$ such that $i\in[j,k]$, indicate the length
    of the prefix match between $q$ and $s_i$. 
  \item \emph{Insert.} Given an input string $\hat{s}$ and target
    rank $i$, insert $\hat{s}$ into $s$ as the new  string of rank
    $i$. 
  \item \emph{Delete.} Given a target rank $i$, delete
    $s_i$ from $s$.
   \item \emph{Splice, concatenate, and other string operations.} For example, given $i$, concatenate 
$s_{i}$ with $s_{i+1}$. Given $i$ and length $x$, remove up to $x$ bits from the beginning of $s_{i}$.
  \item \sloppy \emph{Parallel inserts, deletes, splices, concatenations, and queries.} Multiple insert, delete, query, splice, or concatenate operations can be supported in parallel. For example, remove $x$ bits from the beginning of all strings, and indicate which strings still have bits.
  \end{enumerate}
\end{lemma}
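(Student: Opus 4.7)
The plan is to realize each of the six operations using only $O(1)$ standard word-RAM primitives on $\Theta(\log n)$-bit words, exploiting the fact that $q$, the concatenation $s$, and the metadata all fit in $O(1)$ machine words. I will rely on bitwise AND/OR/XOR, left and right shifts, addition and subtraction, and broadcast via multiplication by a repunit-like constant. Together with the metadata bit mask $M$ that marks the first bit of each $s_i$, these primitives suffice to simulate the more structured operations the lemma asks for.

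First I will set up a few reusable building blocks. From $M$, I can derive in $O(1)$ time a mask $E_i$ selecting exactly the bits of $s_i$ and a mask $R_{j,k}$ selecting the bits of $s_j \circ \cdots \circ s_k$, using ``extract the $i$th set bit'' either via the standard popcount/most-significant-bit word-RAM primitives or via a single precomputed half-word table of $O(\sqrt{n})$ entries. I can also construct a broadcast word $Q$ in which a shifted copy of $q$ is aligned to start at each $1$ of $M$: when the $s_i$ all have equal length this is one multiplication by a repunit constant, and in general it is $O(1)$ shift-and-OR steps across $O(1)$ words.

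For the prefix queries (items 1 and 2), I compute $X = Q \oplus s$ and then, within each block of $X$ delimited by consecutive $1$s of $M$, locate the position of the first $1$ bit. This per-block first-set-bit operation can be implemented by the classical Fich--Miltersen-style trick of adding a carefully constructed constant whose $1$s sit at block boundaries and masking the result against $M$, or equivalently by a half-word lookup table. Subtracting each per-block answer from the corresponding block-start position gives the prefix-match length with $q$ (item 2), and comparing it against the length of $s_i$ encoded in $M$ indicates whether $s_i$ is itself a full prefix of $q$ (item 1); a final AND with $R_{j,k}$ restricts the answer to the requested range.

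For insert, delete, splice, and concatenate (items 3--5), I use $E_i$ to split $s$ into a low part and a high part by a pair of mask-AND operations, shift the high part by the appropriate amount (the length of the inserted string, the deleted string, or the spliced/concatenated content), and OR the pieces back together; $M$ is updated by the identical shift-and-mask pattern so that the block structure stays consistent. Parallel operations (item 6) follow because each primitive edit or query is a fixed-length sequence of word operations on $O(1)$ words, and up to $O(w)$ independent disjoint edits can be packed into the same sequence with different shift amounts while staying inside the $O(1)$-word budget. The hardest step is the per-block first-set-bit computation, because blocks vary in length and the metadata mask must be used to keep mismatches in one block from leaking into the next; I plan either to invoke the standard Fich--Miltersen machinery directly or to keep a shared precomputed table of size $2^{w/2} = O(\sqrt{n})$ that resolves the problem for any half-word, after which the remaining items reduce to routine bit manipulation.
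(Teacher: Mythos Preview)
Your plan is in the same spirit as the paper's: both rely on the fact that $q$, $s$, and the metadata fit in $O(1)$ words, and both ultimately lean on sublinear-sized lookup tables. The paper's proof is terse and table-first: it partitions the data into chunks of $(\log n)/8$ bits, handles the $O(1)$ strings that straddle chunk boundaries one at a time, and for anything contained in a single chunk it just looks the answer up in a precomputed table indexed by the chunk and the (short) query, of total size $o(n)$. Insert/delete/splice are dispatched in one sentence as ``selects, bit-shifts, masks.''

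Your primary route---broadcast $q$ so that a copy starts at every $1$ of $M$, XOR with $s$, then locate the first set bit per block---has a gap at the broadcast step. You assert that for variable-length $s_i$ the broadcast is still ``$O(1)$ shift-and-OR steps across $O(1)$ words,'' but each shift moves by a single fixed amount, so $O(1)$ shift-and-ORs can place copies of $q$ at only $O(1)$ distinct offsets, whereas $M$ may mark $\Theta(\log n)$ distinct starting positions. Multiplying $q\cdot M$ does drop a copy at every $1$ of $M$, but whenever $|q|$ exceeds the gap between consecutive block starts the copies overlap and add with carries, corrupting exactly the bits you then XOR against $s$. The same objection hits your item-6 claim that ``up to $O(w)$ independent disjoint edits can be packed into the same sequence with different shift amounts'': shifting different segments by different, data-dependent amounts in $O(1)$ word operations is the hard part, not a freebie. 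Your own fallback---a precomputed table of size $2^{w/2}=O(\sqrt{n})$ that resolves each half-word---does handle all of this and is essentially the paper's argument with a different chunk size; lead with that and drop the broadcast construction for the variable-length case.
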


\begin{proof}
  These operations can be supported using standard compact and
  succinct data-structures techniques, in particular, rank, select,
  mask, shift, as well as sublinear-sized lookup tables.

  Insertion, deletion, concatenation, splicing, etc., are handled
  using simple selects, bit-shifts, masks, etc.

  We first explain how to find a string $s_i$ that is a prefix match
  of $q$ for the special case that $q$ has length at most $\log n/8$.
  We partition $s$ into chunks such that each chunk has size
  $\log n/8$. Now some $s_j$ are entirely contained within one chunk
  and some straddle a chunk boundary. Since only $O(1)$ strings can
  straddle a chunk boundary, we can search each of these strings
  serially. 
  
  In contrast, there may be many strings that are entirely contained
  within a chunk, and these we need to search in parallel.  We can use
  lookup tables for these parallel searches. Since there are at most
  $2^{\log n/8}=n^{1/8}$ input choices for query string $q$ and at
  most $n^{1/4}$ input choices for the concatenated target strings
  (metadata bits and $s$), a lookup table with precomputed responses
  to all possible queries still takes $o(n)$ bits.

 The remaining operations are supported similarly using rank, select, and lookup tables. 
 For example, the more general case of querying
  larger $q$ is also supported similarly
  by dividing $q$  into chunks, comparing the chunks iteratively, and doing further parallel
  manipulation on $s$, also using lookup tables. In particular, since we compare $q$ in chunks, we have to remove all strings $s_i$ that already do not have a prefix in common with $q$ as well as those shorter strings where no prefix is left. 
\end{proof}

\end{document}